\numberwithin{equation}{section}
\DeclarePairedDelimiter{\floor}{\lfloor}{\rfloor}
\def\titlerunning#1{\gdef\titrun{#1}}
\def\author#1{\gdef\autrun{\def\and{\unskip, }#1}\gdef\@author{#1}}
\def\address#1{{\def\and{\\\hspace*{18pt}}\renewcommand{\thefootnote}{}%
\footnote {#1}}%
\markboth{\autrun}{\titrun}}
\newtheorem{theorem}{Theorem}[section]
\newtheorem{lemma}[theorem]{Lemma}
\newtheorem{proposition}[theorem]{Proposition}
\theoremstyle{definition}
\newtheorem{definition}[theorem]{Definition}
\numberwithin{equation}{section}
\let\Im\undefined
\let\Re\undefined
\DeclareMathOperator{\Im}{Im \,}
\DeclareMathOperator{\Re}{Re \,}
\DeclareMathOperator{\E}{\mathbb{E}}
\DeclareMathOperator{\dist}{dist}
\newcommand{\N} {{\mathbb N}}
\newcommand{\C}{\mathbb{C}}
\newcommand{\G}{\mathbb{G}}
\newcommand{\R}{\mathbb{R}}
\newcommand{\Schr}{Schr\"odinger }
\def\Ev#1{{\mathbb E}\left(#1 \right)}
\def\Pr {\mathbb P}    %  conflicting!!
\DeclareMathOperator{\indfct}{\mathbbm{1}}
 \def\XXint#1#2#3{{\setbox0=\hbox{$#1{#2#3}{\int}$}
 \vcenter{\hbox{$#2#3$}}\kern-.5\wd0}}
\def\be{\begin{equation}}
\def\ee{\end{equation}} 
\def\={\  = \  }   % for space around "=" 
\def\+{\  + \  }   % for space around "+" 
\def\-{\  - \  }    % for space around "-" 
\def\:{\  : \   }    % for space around ":" 
\begin{document}

%%%%% To ease editing, add:

%\baselineskip=17pt

%%%%%%%%%%%%%%%%

\titlerunning{Resonant Delocalization}  

%\lhead{M. Aizenman and S. Warzel}
%\rhead{Resonant Delocalization}
%\authorrunning{M. Aizenman and S. Warzel} 

\title {Boosted Simon-Wolff Spectral Criterion \\ 
and Resonant Delocalization} 

\author{Michael Aizenman  \and  Simone Warzel
}

\date{December 10, 2014 \\  Revised Dec. 7, 2015}% {Printed  \today}

\maketitle

\address{M. Aizenman: Departments of Physics and Mathematics,  Princeton University, USA.
\and S. Warzel: Zentrum Mathematik, TU M\"unchen, 
 Boltzmannstr. 3, 85747 Garching, Germany.}

%\subjclass{Primary 	60E99 ; Secondary 15B52}

%%%%%%%%

\begin{abstract} 
Discussed here are criteria for the existence of continuous components in the spectra of  operators with random potential.  First,   the essential condition for the 
Simon-Wolff criterion is shown to be measurable at infinity.  By implication, for the iid case and more generally potentials with the K-property the criterion is boosted by a zero-one law.   The boosted criterion, combined with 
tunneling estimates, is then applied for sufficiency conditions for the presence of continuous spectrum for random \Schr operators.  The general proof strategy which this yields is modeled on the resonant delocalization arguments  by which continuous spectrum in the presence of disorder was previously established for random operators on tree graphs. 
In another application of the Simon-Wolff rank-one analysis we 
prove the almost sure simplicity of the pure point spectrum for  operators with random potentials of conditionally continuous distribution.
%%% Keywords are optional
%\keywords{........}
\end{abstract}

%%%%%%%%%  
%\newpage

%\setcounter{tocdepth}{2}   %%  Table of Contents (may be skipped on a short paper)
%{\small
%\tableofcontents 
%}
\section{Introduction} 

Studies of the spectral effects of  disorder often deal with self-adjoint operators of the form
\be  \label{H}
H(\omega) = A + V(\omega) 
\ee 
acting in the $\ell^2$-space of functions over an infinite graph $\G$, where  
$ A $ is a self-adjoint bounded   operator and  $V(\omega) $ is a multiplication operator by a random function (the random potential)
$
[V(\omega) \psi ](x) \ = \ V(x;\omega) \psi (x)
$
with $\{ V(x;\omega)\}_{x\in \G}$ a collection of random variables with a specified joint distribution.   The parameter $\omega$, which represents the disorder, 
ranges  over a standard probability space $(\Omega, \mathcal B, \mathbb{P} )$.  In this setup $H(\omega)$ forms a weakly measurable, self-adjoint, operator-valued function  (cf.~\cite{CaLa}).  The strength of the disorder is expressed here in the width of the distribution of $V(x;\omega)$, loosely speaking.

The spectral measure associated with a specified realization of 
$H(\omega)$ and a vector
$\psi \in \ell^2(\G)$,  
is
defined by the property: 
\be 
 \langle \psi,\,  F(H(\omega))\,  \psi \rangle \ = \  \int _\R F(E) \, \mu_{\psi}(dE;\omega)
\ee 
for all $F \in C_0(\R)$ (continuous  function which vanishes at infinity).  
Each  measure can be decomposed into 
its  pure-point component and a continuous one: 
\be 
 \mu_{\psi} \ = \  \mu_{\psi}^{pp} \ +\  \mu_{\psi}^{c}
 \ee 
where $ \mu^{pp}$ is a countable sum of  point measures and $\mu^{c}$ is a continuous remainder.   
The distinction between the two  spectra is reflected in the nature of the (possibly generalized) eigenfunctions: in the pure point case these are proper elements of the $\ell^2$-space, whereas for the continuous spectrum the eigenfunctions are not  square summable.  
The difference carries also significant implications for the recurrence properties of the unitary evolution generated by $H(\omega)$ (cf.\ \cite{CFKS}) and  the conductive properties of particle systems with such one-particle Hamiltonians.   \\

In the well known Anderson localization phenomenon~\cite{an},   at sufficiently high disorder 
 as well as at extremal energies (with some exceptions~\cite{AiWa_EPL}), the spectrum of $ H(\omega) $ is almost surely of pure point type, consisting  of dense (random)  collections  of proper eigenvalues  associated with  square integrable eigenfunctions.  
There remains however  dearth of methods for establishing regimes of delocalization in the presence of disorder.   
On the short list of  such  are  arguments based on  \emph{resonant delocalization}.  
   This approach has been especially effective for random 
Schr\"odinger operators on tree graphs~\cite{AiWa_EPL,AiWa_resdeloc}, but it was shown to guide one to correct conclusions also in other contexts~\cite{ASW_14}. 
Our main goal here is to advance this method, combining it  with an improved version of 
the Simon-Wolff criterion for a related sufficiency criterion under which one may conclude  
 the existence of continuous spectrum, and in some situations absolutely continuous one.
  \\  
 
In a related application of the Simon-Wolf criterion for point spectrum, in Appendix II we present an improved result on the simplicity of the point spectrum proving it for a naturally broad class of random potentials.  

\section{The Simon-Wolff spectral criterion and its boost}  
\label{sec:SiWo} 

\subsection{The Simon-Wolff  sufficiency condition for continuous spectrum}

Analysis of the spectral measures associated with the canonical basis elements $\delta_x\in \ell^2(\mathbb{G})$ 
is facilitated by considerations of the  Green function: 
\be  \notag
G(x,y;z;\omega) := \langle \delta_x \, , (H(\omega) - z )^{-1} \delta_y \rangle \,.  
\ee 
When  the potential of $H(\omega)$  is changed at a site $x\in G$   by $\delta V(x)$, 
an  eigenfunction which solves 
\be (H-E) \varphi  (y)=0 
\notag
\ee  
turns into a solution of  the Green function equation 
\be \notag 
([H +\delta V(x) P_x]  -E) \varphi  (y)\ = \ [\delta V(x) \, \varphi(x)] \delta_{x,y} \, . 
\ee 
This elementary observation underlines a number of results concerning the structural similarity of the eigenfunctions to the kernel of the Green function, with one of its arguments fixed.   
In particular, starting from Aronszajn's analysis of rank-one perturbations~\cite{Aron57}, B. Simon and T. Wolff  noted the following~\cite{SimWol}.
 
 \begin{proposition} \label{prop:Aronj}
Let $\G$ be a countable  set,  $H_0$ a bounded self-adjoint operator in $\ell^2(\G)$, and $H_v$ the one-parameter family of operators defined by  
\be  \label{eq:H_v}
 H_v = H_0  + v \, P_\psi %| \delta_x \rangle \langle \delta_x |  \, . 
\ee 
with $P_\psi$ a rank-one orthonormal projection on a space spanned by a normalized function $\psi$.    
Then for any $ v \neq 0 $ and $ E \in \R $ the following statements are equivalent:
\begin{enumerate}[i.]
\item $E$ is a proper eigenvalue of $H_v$, i.e. $ \mu_{v,\psi}\left(\{ E\} \right) > 0 $,\\[-0.3ex]  
\item the following quantity is finite
\be \notag % \label{def_hatgamma}
 \gamma_{0,\psi}(E) \ := \ \lim_{\eta \downarrow 0} \sum_y | \langle \delta_y , \, (H_0 - E -i \eta ) ^{-1} \, \psi \rangle |^2  \ = \    \int \frac{\mu_{0,\psi}(dt;\omega)}{(t-E)^2 }       \ < \  \infty  
\ee  
and 
\be 
\langle \psi , \, (H_0 - E -i 0 ) ^{-1} \, \psi \rangle \ = \ - v^{-1} \, .
\ee 
Moreover, if the condition is satisfied  then  $ \mu_{v,\psi}\left(\{ E\} \right) = v^{-2}  / \widehat \gamma_{\psi}(E) $.  
\end{enumerate}
\end{proposition}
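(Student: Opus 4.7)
\medskip
\noindent\textbf{Proof plan for Proposition \ref{prop:Aronj}.}
The plan is to reduce everything to the behavior of the Borel transform
$F_v(z):=\langle\psi,(H_v-z)^{-1}\psi\rangle$ on the real line, using
the classical identity for point masses of spectral measures,
\[
\mu_{v,\psi}(\{E\}) \;=\; \lim_{\eta\downarrow 0}\,\eta\,\Im F_v(E+i\eta),
\]
which follows from the representation $F_v(z)=\int (t-z)^{-1}\mu_{v,\psi}(dt)$ and dominated convergence applied to $\eta\,\Im(t-E-i\eta)^{-1}=\eta^2/((t-E)^2+\eta^2)$. Thus the question becomes: \emph{when does $\eta\,\Im F_v(E+i\eta)$ have a strictly positive limit?}

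First I would establish, via the rank-one resolvent identity
$(H_v-z)^{-1}=(H_0-z)^{-1}-v(H_0-z)^{-1}P_\psi(H_v-z)^{-1}$, the well-known Aronszajn--Krein relation
\[
F_v(z)\;=\;\frac{F_0(z)}{1+v\,F_0(z)},\qquad \Im F_v(z)\;=\;\frac{\Im F_0(z)}{|1+v F_0(z)|^2}.
\]
Observe also that, by monotone convergence,
$\Im F_0(E+i\eta)=\eta\!\int\!\mu_{0,\psi}(dt)/((t-E)^2+\eta^2)=\eta\,G(\eta)$ with $G(\eta)\uparrow\gamma_{0,\psi}(E)$, so
\[
\eta\,\Im F_v(E+i\eta)\;=\;\frac{\eta^2\,G(\eta)}{|1+vF_0(E+i\eta)|^2}.
\]

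Next I would split into cases according to whether $\gamma_{0,\psi}(E)$ is finite.
In the case $\gamma_{0,\psi}(E)=\infty$, the denominator is bounded below by $v^2(\Im F_0)^2=v^2\eta^2 G(\eta)^2$, so $\eta\,\Im F_v(E+i\eta)\le 1/(v^2 G(\eta))\to 0$, forcing $\mu_{v,\psi}(\{E\})=0$.
In the case $\gamma_{0,\psi}(E)<\infty$, the Cauchy--Schwarz bound $\int|t-E|^{-1}\mu_{0,\psi}(dt)\le\mu_{0,\psi}(\R)^{1/2}\gamma_{0,\psi}(E)^{1/2}<\infty$ makes $F_0(E):=\int(t-E)^{-1}\mu_{0,\psi}(dt)$ a well-defined real number equal to the boundary value $F_0(E+i0)$; moreover the identity
\[
F_0(E+i\eta)-F_0(E)\;=\;i\eta\!\int\!\frac{\mu_{0,\psi}(dt)}{(t-E)(t-E-i\eta)}
\]
together with dominated convergence (the integrand is dominated by $(t-E)^{-2}$) gives
$F_0(E+i\eta)=F_0(E)+i\eta\,\gamma_{0,\psi}(E)+o(\eta)$.
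Substituting,
\[
|1+vF_0(E+i\eta)|^2=(1+vF_0(E))^2+v^2\eta^2\gamma_{0,\psi}(E)^2+o(\eta^2),
\]
so $\eta\,\Im F_v(E+i\eta)$ has a strictly positive limit iff $1+vF_0(E)=0$, in which case the limit equals $1/(v^2\gamma_{0,\psi}(E))$, which yields both the equivalence $(i)\Leftrightarrow(ii)$ and the quantitative formula for $\mu_{v,\psi}(\{E\})$.

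The main delicate point I expect is the case $\gamma_{0,\psi}(E)<\infty$: one must justify simultaneously the existence of the real boundary value $F_0(E+i0)$ and the leading-order expansion of $F_0(E+i\eta)$ in $\eta$, since both pieces enter the numerator/denominator balance and a naive bound would only yield an inequality rather than the equality $\mu_{v,\psi}(\{E\})=v^{-2}/\gamma_{0,\psi}(E)$. Everything else is a bookkeeping application of the resolvent identity and standard Herglotz function facts.
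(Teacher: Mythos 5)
The paper does not actually prove Proposition~\ref{prop:Aronj}; it simply cites it to Aronszajn~\cite{Aron57} and Simon--Wolff~\cite{SimWol}. Your proposal is a correct reconstruction of the standard Aronszajn--Krein argument that underlies those references, and the structure you outline (reduce to $\lim_{\eta\downarrow 0}\eta\,\Im F_v(E+i\eta)$, derive $F_v=F_0/(1+vF_0)$ from the rank-one resolvent identity, split on whether $\gamma_{0,\psi}(E)$ is finite, and in the finite case establish the Taylor expansion $F_0(E+i\eta)=F_0(E)+i\eta\gamma_{0,\psi}(E)+o(\eta)$ via dominated convergence with majorant $(t-E)^{-2}$) is exactly how the result is proved in the cited literature. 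The case analysis is complete: when $\gamma_{0,\psi}(E)=\infty$ your lower bound on the denominator kills the point mass; when it is finite, the expansion gives zero limit unless $1+vF_0(E)=0$, in which case the limit is $1/(v^2\gamma_{0,\psi}(E))$, recovering the quantitative formula (and implicitly correcting the paper's typo $\widehat\gamma_\psi=\gamma_{0,\psi}$).

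One minor caveat worth being aware of, though it does not affect the argument as you have set it up: the paper's phrasing ``$E$ is a proper eigenvalue of $H_v$, i.e.\ $\mu_{v,\psi}(\{E\})>0$'' implicitly identifies eigenvalues of $H_v$ with atoms of the spectral measure $\mu_{v,\psi}$, which is only literally correct when restricting to the cyclic subspace generated by $\psi$ (eigenvalues of $H_v$ coming from $\Hi_\psi^\perp$ are invisible to $\mu_{v,\psi}$ and are independent of $v$). Your proof establishes the statement in the form the paper actually uses, namely as an equivalence for $\mu_{v,\psi}(\{E\})>0$, which is the right reading of the Proposition; if you wanted the stated ``proper eigenvalue'' formulation verbatim you would need to note that $\Hi_\psi^\perp$ plays no role or assume cyclicity, but that is a defect of the paper's phrasing, not of your argument.
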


Combining  the above with a spectral averaging principle, Simon and Wolff  
presented a useful criterion in which reference is made to 
\be \label{def_gamma}
 \gamma_{x}(E;\omega) \ := \ \lim_{\eta \downarrow 0} \sum_y |G(x,y;E+i\eta;\omega)|^2   = 
\int \frac{\mu_{\delta_x}(dt;\omega)}{(t-E)^2}  \, \in \ (0, \infty]\,  
\ee 
(the limit existing by monotonicity).

\begin{proposition}[Simon - Wolff \cite{SimWol,SimR1}] 
\label{thm:SimWol}  
Let $H(\omega)= A +  V(\omega) $ be a self-adjoint operator on $\ell^2(\G)$ such that for each $x\in \G$ the random variable $ V(x) $ is of conditionally absolutely continuous distribution,  conditioned on   ($V_{\neq x}:= \{ V(y)\}_{y\neq x}$). 
If, for a Borel subset   $I \subset \R$ 
\be \label{eq:SimWolcond2}
 \gamma_{x}(E,\omega)  \ = \  \infty  \,  ,
\ee 
for Lebesgue almost every $E \in I$  and $ \mathbb{P} $-almost all $ \omega $,  
then almost surely  $H(\omega)$ has only continuous spectrum  (if any) in $I$.
\end{proposition}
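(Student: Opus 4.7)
The plan is to show, for each fixed $x \in \G$, that the spectral measure $\mu_{\delta_x}$ almost surely carries no pure-point mass in $I$. Since $\G$ is countable and $\{\delta_x\}_{x \in \G}$ is total in $\ell^2(\G)$, a countable union then yields that the pure-point projection of $H(\omega)$ onto $I$ vanishes almost surely. Fix $x$ and split $H(\omega) = H_0 + V(x)\, P_{\delta_x}$, where the ``deleted'' operator $H_0 := H(\omega) - V(x) P_{\delta_x}$ depends only on the remaining randomness $V_{\neq x}$. Throughout, $G_0$, $\gamma_{0,\delta_x}$, and $F_0(E) := \lim_{\eta \downarrow 0} G_0(x,x;E+i\eta)$ denote the corresponding objects built from $H_0$.

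The first step is to translate the hypothesis, phrased in terms of $\gamma_x$ (a quantity for the full operator), into one on the deleted operator. The rank-one resolvent identity gives $G(x,y;z;\omega) = G_0(x,y;z)/[1 + V(x)\, G_0(x,x;z)]$, and squaring, summing over $y$, and taking $\Im z \downarrow 0$ yields
\[
\gamma_x(E,\omega) \,=\, \frac{\gamma_{0,\delta_x}(E)}{|1 + V(x)\, F_0(E)|^2} \,\in\, (0,\infty].
\]
In particular, for fixed $V_{\neq x}$ and fixed $E$, $\gamma_x(E,\omega) = \infty$ holds either because $\gamma_{0,\delta_x}(E) = \infty$ or because $V(x) = -1/F_0(E)$; by conditional absolute continuity of $V(x)$, the latter alternative occurs with probability zero. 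Fubini thus converts the hypothesis into the assertion that, for $\mathbb{P}$-a.e.\ realization of $V_{\neq x}$, $\gamma_{0,\delta_x}(E) = \infty$ for Lebesgue-a.e.\ $E \in I$.

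Next, fix such a realization of $V_{\neq x}$ and consider the rank-one family $H_v = H_0 + v P_{\delta_x}$, $v \in \R$. By Proposition~\ref{prop:Aronj}, every eigenvalue of $H_v$ in $I$ lies in the Lebesgue null set $N := \{E \in I : \gamma_{0,\delta_x}(E) < \infty\}$, so $\mu_{v,\delta_x}^{pp}(I) \le \mu_{v,\delta_x}(N)$ for every $v \in \R$. The classical spectral averaging identity
\[
\int_\R \mu_{v,\delta_x}(B)\, dv \,=\, |B|, \qquad B \in \mathcal{B}(\R),
\]
applied with $B = N$ forces $\mu_{v,\delta_x}(N) = 0$ for Lebesgue-a.e.\ $v$; a second appeal to the conditional absolute continuity of $V(x) \mid V_{\neq x}$ then yields $\mu_{V(x),\delta_x}^{pp}(I) = 0$ almost surely.

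The main subtlety lies in handling the rank-one identity for $\gamma_x$ at exceptional energies where $F_0(E)$ or $\gamma_{0,\delta_x}(E)$ fails to be finite; this is dispatched by working first at $\eta > 0$ and monitoring the limit. The other indispensable structural input is the spectral averaging formula: without it, a naive Fubini argument would only bound the expected pure-point mass and not force it to vanish at the particular (random) coupling $v = V(x)$.
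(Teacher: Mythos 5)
The paper does not prove this proposition itself --- it cites Simon--Wolff \cite{SimWol,SimR1} and only later supplies its own new contributions (the zero-one boost of Theorem~\ref{thm:0-1}, etc.). Your reconstruction is correct and is essentially the classical Simon--Wolff argument: the rank-one identity $G(x,y;z)=G_0(x,y;z)/[1+V(x)G_0(x,x;z)]$ converts the hypothesis on $\gamma_x$ (for the full operator) into the statement that $\gamma_{0,\delta_x}(E)=\infty$ for a.e.\ $E\in I$ given a.e.\ $V_{\neq x}$; Proposition~\ref{prop:Aronj} then confines any point mass of $\mu_{v,\delta_x}$ in $I$ to the Lebesgue-null set $N=\{E:\gamma_{0,\delta_x}(E)<\infty\}$; spectral averaging $\int\mu_{v,\delta_x}(N)\,dv=|N|=0$ together with the conditional absolute continuity of $V(x)$ kills $N$, and a countable union over $x\in\G$ finishes. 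You correctly flag the one point requiring care (the rank-one formula at energies where $F_0(E)$ or $\gamma_{0,\delta_x}(E)$ degenerate, handled by working at $\eta>0$ and using that the exceptional $v=-1/F_0(E)$ has conditional probability zero), so there is no gap.
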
 

For the proof, and hence also applications of this criterion it is essential  that the 
event~\eqref{eq:SimWolcond2} is of probability one.  
Yet  the second-moment analysis which has been employed in resonant delocalization arguments yields (initially) only a weaker result, that \eqref{eq:SimWolcond2} holds with non-zero probability.   
The purpose of the following boost is to bridge this gap. 
 
\subsection{The boost: a zero-one law}
 
 Let $V(\omega)$ be a random potential, specified through the collection of random variables $\{V(x;\omega)\}_{x\in \G}$.  For each $\Lambda \subset \G$, we denote by $\mathcal B_{\Lambda}$ the minimal $\sigma$-algebra of  subsets $A\subset \Omega$ for which  $\omega \mapsto 1_A(\omega)$ is a measurable function of  $\{ V(y)\}_{y\in  \Lambda}$.

\begin{definition}  \mbox{} \\[-3ex] 
 \begin{enumerate}[1.]  
\item A random variable $F: \Omega \mapsto \R$ is \emph{measurable at infinity} if    for each finite $\Lambda \subset \G$,  $F$  is measurable with respect to $\mathcal B_{\Lambda^c}$.   
\item A stochastic process over a graph $\G$   is said to have  the \emph{$K$-property}, if any random variable which is measurable at infinity is constant almost surely. 
\end{enumerate}
\end{definition} 

The simplest example of processes with the $K$-property are those for which $\{V(y)\}_{y\in \mathbb{G}} $ are independent random variables.  For random potentials with this property the applicability of the Simon-Wolff criteria 
is hereby  boosted by the following zero-one law.  

\begin{theorem}  \label{thm:0-1}
Let $H(\omega)= A +  V(\omega) $ be a random self-adjoint operator on $\ell^2(\G)$ with  $V(\omega)$ a random potential with the $K$-property, and such that for each vertex $x\in \G$  the conditional single-site distribution of $ V(x) $ conditioned on 
$V_{\neq x} $ is  continuous.
Then  for Lebesgue almost every $E\in \R$: 
\be \label{eq:0-1}
 \mathbb{P} \left(  \gamma_{x}(E) < \  \infty  \, \right)    \quad \mbox{equals either $0$ or $1$.  } 
\ee 
\end{theorem}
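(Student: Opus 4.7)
The plan is to combine the K-property hypothesis with a tail-measurability statement: for Lebesgue almost every $E$, I show that $\{\gamma_x(E;\omega)<\infty\}$ agrees $\mathbb{P}$-almost surely with a $\mathcal{B}_{\Lambda^c}$-measurable event for every finite $\Lambda\subset\G$. Once this is in hand, the event is measurable at infinity and the K-property forces its probability into $\{0,1\}$, which is~\eqref{eq:0-1}.

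The main building block is a single-site reduction. Fix $y\in\G$ (possibly $y=x$) and write $H(\omega)=H^{(y)}(\omega)+V(y;\omega)P_y$, where $H^{(y)}$ is obtained from $H$ by setting $V(y;\omega)$ to zero and is therefore $\mathcal{B}_{\neq y}$-measurable. Denoting the resolvent of $H^{(y)}$ by $G^{(y)}$, the rank-one resolvent identity yields
\begin{equation*}
(H(\omega)-z)^{-1}\delta_x = (H^{(y)}-z)^{-1}\delta_x - \frac{V(y;\omega)\,G^{(y)}(y,x;z)}{1+V(y;\omega)\,G^{(y)}(y,y;z)}\,(H^{(y)}-z)^{-1}\delta_y .
\end{equation*}
Letting $\eta=\Im z\downarrow 0$, for Lebesgue a.e.\ $E$ the boundary values of $G^{(y)}(y,x;\cdot)$ and $G^{(y)}(y,y;\cdot)$ exist as finite complex numbers, and the finiteness of $\gamma_x(E;\omega)=\|(H-E-i0)^{-1}\delta_x\|_{\ell^2}^2$ is governed by the $\ell^2$-status of the vectors $(H^{(y)}-E-i0)^{-1}\delta_x$ and $(H^{(y)}-E-i0)^{-1}\delta_y$ and by the M\"obius coefficient in front. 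A short case analysis---separating according to which of the two vectors belongs to $\ell^2$---shows that, for Lebesgue a.e.\ $E$, the set $\{\,v\in\R:\gamma_x(E;H^{(y)}+vP_y)<\infty\,\}$ either has full Lebesgue measure or is finite. Since $V(y;\omega)$ has a continuous conditional law given $V_{\neq y}$, this means $\mathbb{P}(\gamma_x(E;\omega)<\infty\mid\mathcal{B}_{\neq y})\in\{0,1\}$ a.s.; consequently $\{\gamma_x(E;\omega)<\infty\}$ coincides $\mathbb{P}$-a.s.\ with some $\mathcal{B}_{\neq y}$-measurable event $A^{(y)}(E)$.

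To extend this to an arbitrary finite $\Lambda=\{y_1,\dots,y_k\}$, apply the single-site step at each $y_j$ to obtain $A^{(y_j)}(E)\in\mathcal{B}_{\neq y_j}$ with $A^{(y_i)}(E)=A^{(y_j)}(E)$ mod $\mathbb{P}$-null for all $i,j$. An elementary induction now converts this into $\mathcal{B}_{\Lambda^c}$-measurability: given the mod-null equalities, condition on $V_{\neq y_2}$; then $\mathbf{1}_{A^{(y_1)}}$ is a function of $V_{\neq y_1}$ that equals a.s.\ a function of $V_{\neq y_2}$, hence is a.s.\ constant in $V(y_2)$ under the continuous conditional distribution of $V(y_2)$ given the rest, so can be replaced by a function of $V_{\{y_1,y_2\}^c}$. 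Removing $V(y_2),\dots,V(y_k)$ one site at a time yields a $\mathcal{B}_{\Lambda^c}$-measurable representative of $\{\gamma_x(E)<\infty\}$, and the K-property then completes the proof.

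The principal obstacle lies in the case analysis of the single-site step when one or both of $(H^{(y)}-E-i0)^{-1}\delta_x$, $(H^{(y)}-E-i0)^{-1}\delta_y$ fail to belong to $\ell^2$: one must verify that the $v$-parameters producing an $\ell^2$-cancellation between the two terms of the resolvent identity form at most a finite subset of $\R$. This rests on the M\"obius injectivity of $v\mapsto v\,G^{(y)}(y,x;E+i0)/(1+v\,G^{(y)}(y,y;E+i0))$ away from its pole, combined with the elementary rigidity statement that at most one scalar can equate two fixed non-$\ell^2$ equivalence classes, as the difference of two such scalars would place $(H^{(y)}-E-i0)^{-1}\delta_y$ in $\ell^2$. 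Once this is in place, the iteration described above is routine.
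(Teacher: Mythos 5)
Your overall architecture matches the paper's: you show that $\{\gamma_x(E)<\infty\}$ is, up to null sets, $\mathcal B_{\Lambda^c}$-measurable for every finite $\Lambda$, and then let the $K$-property collapse the probability to $\{0,1\}$. The genuine divergence from the paper is in the single-site step. Where you apply the rank-one resolvent identity to the vector $(H-z)^{-1}\delta_x$ and classify according to $\ell^2$-membership of the two boundary vectors $(H^{(y)}-E-i0)^{-1}\delta_x$ and $(H^{(y)}-E-i0)^{-1}\delta_y$, the paper instead passes to the scalar Herglotz function $\mathcal F_x(z,\omega)=\frac{-1}{\Im z}\,\langle\delta_x,(H-z)^{-1}\delta_x\rangle^{-1}$, whose imaginary part equals $|G(x,x)|^{-2}\cdot\langle\delta_x,((H-E)^2+\eta^2)^{-1}\delta_x\rangle$, and proves a standalone Lemma (\ref{lem:Fv}) about M\"obius maps: if $F_n$ is a sequence of fractional-linear maps with $\Im F_n\ge0$ that converges pointwise, then $\lim\Im F_n(V)$ is finite or infinite simultaneously for all but at most one $V\in\R$. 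The lemma is proved by a cross-ratio and circle-geometry argument. Your conclusion (at most finitely many, in fact at most one, exceptional values of $v$) mirrors the lemma's, and your rigidity observation---two distinct scalars $c_1\ne c_2$ cannot both cancel a non-$\ell^2$ divergence, since the difference would put $\phi_y$ in $\ell^2$---is a clean way to see it. What your route buys is elementarity; what it costs is that your case analysis is more delicate than it looks: the resolvent identity's coefficient is $c_\eta(v)$, not $c(v)$, and when $\phi_y^\eta$ diverges in norm while $G^{(y)}(y,x;E+i\eta)\to0$, the product $c_\eta(v)\phi_y^\eta$ is an $\infty\cdot0$ form whose behavior cannot be read off from the boundary values alone. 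The paper's lemma sidesteps this because it works with the scalar $\Im F_n(V)$ directly, tracking the geometry of the image circles as $n\to\infty$. A second, smaller difference is in how you pass from single-site to finite $\Lambda$: the paper phrases this as an $L^2$-orthogonality bound on the conditional-expectation projections $\mathcal P_\Lambda$ followed by martingale convergence, while you sketch a site-by-site induction. Both arguments as written are terse here; your induction in particular needs a slightly more careful justification of why a random variable that is simultaneously $\mathcal B_{\{y_1\}^c}$- and $\mathcal B_{\{y_2\}^c}$-measurable (mod null) admits a $\mathcal B_{\{y_1,y_2\}^c}$-measurable representative, which uses more than the continuity of the one-site conditional laws in isolation.
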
 

As will be explained in the proof,   the condition 
$\{ \gamma_{x}(E) < \  \infty \} $ 
is essentially equivalent to: 
\be  \label{kappa}  
\kappa_x(E,\omega)  \ := \  \lim_{\eta \downarrow 0}  \frac{-1}{\eta} \Im \,  \langle \delta_x , \frac{1}{H(\omega)-E-i \eta} \,  \delta_x\rangle^{-1} \ <  \  \infty    \,  .  
\ee 
In effect, the proof of \eqref{eq:0-1} proceeds by  showing that for fixed $E\in \R$  the set 
$ \{ \omega \in \Omega \, \big | \,  \kappa_x(E,\omega))   < \infty \}  $ is measurable at infinity -- in the Lebesgue sense, that is up to corrections by sets of measure zero.   

The rest of this section is devoted to  the details of the  argument outlined above.   
Other than the conclusion summarized in Theorem~\ref{thm:0-1}, the proof does not play a role in our discussion of  resonant delocalization, which is the subject of Section~\ref{sec:deloc_tunneling}.

\subsection{Measurability at infinity}

In the proof of Theorem~\ref{thm:0-1} we shall make use of  rank-one and rank-two perturbation formulae, which express the  dependence of $G(x,x, z;\omega)$ on $V(x)$, and its joint dependence on $V(x)$ and any other $V(y)$: 
\begin{enumerate}[1.]
\item The dependence on the potential at $x$ is particularly simple:
\be \label{rank_1}
\langle \delta_x , (H-z)^{-1}\,  \delta_x\rangle \ =: \ \left[ V(x) - \Sigma(x; z)\right]^{-1}
\ee 
with $\Sigma(x; z)$, which is referred to as the \emph{self-energy}, a function of $V_{\neq x}$ only.    
\item For any pair of distinct sites, $x\neq y$:
\begin{align}\label{eq:rank2}
 \left( \begin{matrix} G(x,x;z) & G(x,y;z) \\ G(y,x;z) & G(y,y;z) \end{matrix} \right) =: \left(  \begin{matrix} V(x) -  \sigma(x;z) & - \tau(x,y;z) &  \\ - \tau(y,x;z) & V(y) - \sigma(y;z) \end{matrix}\right)^{-1} \, .
\end{align}  
with $\sigma(x;z) $ and $\tau(x,y;z)$ which do not involve $V(x)$ and $V(y)$.   
Following \cite{ASW_14} we refer to $\tau(x,y;E+i0) $ as the (pairwise) tunneling amplitude between the two sites, at energy $E$. 
\end{enumerate} 
These expression  form two special cases of  the Schur-complement, or Krein-Fesh\-bach formula.   
In the discussion of their implication on the properties of $\kappa_x(E)$ the following statement will be of relevance.   

   \begin{lemma}  
\label{lem:Fv} Let 
\be  \label{frac_lin}
F_n(V) \ := \ \frac{a_n V + b_n}{c_n V + d_n} 
\ee  
stand for a sequence of M\"obius functions with the  property that for all  $V \in \R$:
\begin{enumerate}[1.]
\item $ \Im F_n(V) \geq 0 $, and
\item $\Im F_n(V) $ converges to a limit within $[0,\infty]$ (allowing the value $+\infty$).  
\end{enumerate}
Then, $\displaystyle\lim_{n\to \infty} \Im F_n(V)  $ is finite or infinite simultaneously for all, except at most one value of $V\in \R$.    
\end{lemma}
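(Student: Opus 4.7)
The plan is to write $\Im F_n(V)$, for real $V$, as a quotient of two non-negative real polynomials of degree at most two in $V$, and then to run a coefficient-compactness argument. The key elementary fact to be exploited is that any non-negative real polynomial of degree $\leq 2$ which is not identically zero has at most one real root.

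First I compute directly
\[
\Im F_n(V) \ = \ \frac{P_n(V)}{R_n(V)}, \qquad R_n(V) := |c_n V + d_n|^2,
\]
with $P_n(V) := \Im(a_n \overline{c_n})\, V^2 + [\Im(a_n \overline{d_n}) + \Im(b_n \overline{c_n})]\, V + \Im(b_n \overline{d_n})$. The denominator $R_n$ is manifestly non-negative, and is not identically zero since otherwise $F_n$ would be undefined. The numerator $P_n$ is then also non-negative on $\R$ by hypothesis (1).

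Next I normalize jointly: set $M_n := \|P_n\| + \|R_n\|$ for any fixed norm on coefficient vectors, and put $\tilde P_n := P_n/M_n$, $\tilde R_n := R_n/M_n$. The ratio $\Im F_n = \tilde P_n/\tilde R_n$ is unchanged, while $(\tilde P_n, \tilde R_n)$ now lies in a fixed compact set. Passing to a convergent subsequence $n_k$ yields limit polynomials $\tilde P_\infty$ and $\tilde R_\infty$, both non-negative of degree $\leq 2$, with $\|\tilde P_\infty\| + \|\tilde R_\infty\| = 1$; in particular they do not both vanish.

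Two cases then complete the proof. If $\tilde R_\infty \not\equiv 0$, its real zero set consists of at most one point $V^*$ by the key fact; for every $V \neq V^*$ one has $\tilde R_{n_k}(V) \to \tilde R_\infty(V) > 0$, so the ratio $\tilde P_{n_k}(V)/\tilde R_{n_k}(V)$ converges to $\tilde P_\infty(V)/\tilde R_\infty(V) < \infty$, and hypothesis (2) promotes this to $\lim_n \Im F_n(V) < \infty$ throughout $\R \setminus \{V^*\}$. If instead $\tilde R_\infty \equiv 0$, then $\|\tilde P_\infty\| = 1$ so $\tilde P_\infty \not\equiv 0$, its zero set is at most a single point $V^{**}$, and for every $V \neq V^{**}$ the ratio $\tilde P_{n_k}(V)/\tilde R_{n_k}(V)$ diverges, again forcing $\lim_n \Im F_n(V) = \infty$ outside $V^{**}$. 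In either case at most one value of $V$ deviates from the common finite/infinite behavior, which is precisely the conclusion of the lemma.

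The main obstacle is the degenerate regime in which the denominator coefficients shrink to zero faster than those of the numerator; the joint normalization of the two polynomials is what keeps the numerator from being lost in the limit, and the second (non-negativity) constraint on $P_n$, which is not automatic but follows from hypothesis (1), is exactly what pins the exceptional set down to a single real point.
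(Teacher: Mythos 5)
Your proof is correct, and it takes a genuinely different route from the paper's. The paper argues geometrically: it views $F_n(\dot\R)$ as a generalized circle in $\overline{\C^+}$, isolates its radius $R_n$ and lowest point $U_n$, reduces by translation and subsequence to the regime where $R_n\to\infty$ with $\Im U_n$ bounded, constructs a third control point $V_3$ from two given ones, and finally invokes the invariance of the cross ratio under M\"obius maps to pin down the set $\{V:\limsup_n \Im F_n(V)=\infty\}$ as a single preimage under $V\mapsto g(V;V_1,V_2,V_3)$. You instead write $\Im F_n(V)=P_n(V)/R_n(V)$ with $R_n=|c_nV+d_n|^2$ and $P_n$ a real quadratic, observe that both are nonnegative degree-$\le 2$ polynomials (the second by hypothesis~(1)), jointly normalize the coefficient pair, extract a subsequential limit $(\tilde P_\infty,\tilde R_\infty)$ in a compact set with at least one of the two nonzero, and run the dichotomy off the elementary fact that a nonnegative real polynomial of degree $\le 2$ which is not identically zero has at most one real root. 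Your argument is shorter and arguably more elementary: the ``at most one exception'' drops out of the structure of nonnegative quadratics, no cross-ratio computation or auxiliary third point is needed, and the convergence hypothesis~(2) is invoked only once at the end to upgrade the subsequential statement to the full limit. What the paper's route buys in exchange is a more explicit geometric identification of the exceptional point (as a prescribed cross-ratio preimage), though this extra information is not used in the sequel. One minor point worth flagging, common to both proofs, is that if some $F_n$ has a real pole then hypothesis~(1) at that point needs the convention $\Im F_n=\infty$; in the application $F_n$ are Herglotz in $V$ with $\eta>0$, so no real poles occur and this is harmless.
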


\begin{proof} 
 The fractional linear mapping $F_n : \, \R \to \{ z \in \C | \Im z \geq 0\} =: \mathbb{C}^+_0$    takes $\R$  (or rather its one-point compactification $\dot \R = \R \cup \{ \infty\} $) onto a generalized circle (possibly a line) in $  \mathbb{C}^+_0 $, preserving the canonical orientation.   We denote the circle's radius by $R_n \in [0,\infty]$ and its lowest point by 
$U_n \in \C^+_0$.
In the  degenerate case,   $R_n=\infty$, we set $\Re U_n=0$.    
The asserted dichotomy  holds trivially true (and without exceptional points) if either 
\begin{enumerate}[1.]
\item 
$\displaystyle \limsup_{n\to \infty}  \Im U_n = \infty$ 
 (in which case $\displaystyle \limsup_{n\to \infty} \Im F_n(V) \ = \ \infty  $ for all $ V\in \R$),
  or   
\item   $(\Im U_n )$ and  $( R_n)$ are bounded sequences  
 (in which case $\displaystyle \limsup_{n\to \infty} \Im F_n(V) \ < \ \infty $ for all $ V\in \R$). 
\end{enumerate} 
  Hence it suffices to establish the claim for the case that 
$(\Im U_n)  $ is bounded  and $\displaystyle \limsup_{n\to \infty}  R_n = \infty$. 
Furthermore, since when $\displaystyle \lim_{n\to \infty} \Im F_n(V)  $  exists    it  can be computed over any subsequence,  it suffices to prove the assertion under the additional assumption that these limits exist, i.e.   
\be \notag 
 \lim_{n\to \infty} R_n = \infty \, , \qquad \lim_{n\to \infty}  \Im U_n < \infty \,. 
 \ee 
 As a final simplification we note that the assertion holds for $F_n$ if and only if it holds for the sequence of shifted functions $ F_n(V) -U_n$.
 Based on these considerations,  we add without loss of generality the assumption that  $ U_n = 0$  for all $n\in \N$.
 
 An equivalent form of the statement to be proven is:    if  
\be   \label{Vj}
\lim_{n\to \infty} \Im F_n(V) \ < \ \infty 
\ee 
for two distinct values $V_1<V_2$, then \eqref{Vj} holds for all, but at most one value of $V\in \R$.   We shall prove it in this form.

 Let $V_j$ with $j=1,2$ be  two points  for which \eqref{Vj} holds, and let $Y_j := \displaystyle \lim_{n\to \infty} \Im F_n(V_j)$.  The convergence of the imaginary part does not ensure that of the real part, which may still oscillate between the region $ \Re F_n(V) \geq 0 $ and $ \Re F_n(V) < 0 $. As explained above, it suffices to restrict attention to a subsequence, and we select one  for which the signs of the real part take consistently fixed $\pm1$ values, i.e. for all $ n $ and both $j=1,2$:
  \be \label{sigma} 
{\rm sign}\,  \Re F_n(V_j) \ = \ \sigma_j
 \ee 
declaring $\text{sign}\,  0 = +1$.  

For  the following argument it is convenient to have one more point with the properties of $V_j$, and we start by showing that such a point exists.

The removal from $\dot \R $  of $V_1$ and $V_2$ splits the `generalized circle' which is the image of $\R$ under $F_n$ into two arcs.  We shall refer to  the one which includes the point which minimizes $\Im F_n(V)$ as the `lower arc'.     Along this arc the value of $\Im F_n $  is everywhere bounded by $\max \{ \Im F_n(V_1) , \Im F_n(V_2) \}$.  
There are now  two possibilities: the lower arc will contain  either the image of the midpoint $  (V_1 + V_2)/2 $, or else the image of $V=\infty$, in which case it will also contain the image of $ V_1 - (V_2-V_1)/2 $ (i.e., the midpoint's reflection about $V_1$).   
Restricting to a subsequence, we may assume without loss of generality that which of the options applies does not change with $ n $, and we shall pick the point $ V_3\in \R $ as either  $ (V_1 + V_2)/2 $ or  $ V_1 - (V_2-V_1)/2 $, corresponding to this alternative. 
By this construction, $ \Im F_n(V_3) \le  \max \{ \Im F_n(V_1) , \Im F_n(V_2) \}$,  so that 
\eqref{Vj} holds also for $j=3$.   Possibly passing to a subsequence, it may  be assumed that for one of the  choices of $\sigma_3 \in \{-1,1\}$  also  \eqref{sigma} holds consistently  for $j=3$.

By the invariance of the cross ratio under  fractional linear transformations, for any $V\in \R$ and the above $V_j$: 
 \be  \label{cross_ratio}
 g(V; V_1,V_2,V_3) \ := \  \frac{(V-V_1)(V_2-V_3)}{(V-V_2)(V_1-V_3)} \ = 
 \ \frac{[F_n(V)-F_n(V_1)] [F_n(V_2)-F_n(V_3)]} {[F_n(V)-F_n(V_2)] [F_n(V_1)-F_n(V_3)]} \,.  
 \ee  
We now will show that under the above assumptions, keeping the above three points fixed as $n\to \infty$:
\be \label{ratio1}
\lim_{n\to \infty} \, \frac{F_n(V_2)-F_n(V_3)} {F_n(V_1)-F_n(V_3)} \ = \   
\frac{\sigma_2 \sqrt{Y_2}-\sigma_3\sqrt{Y_3}} {\sigma_1 \sqrt{Y_1}-\sigma_3\sqrt{Y_3}} 
\ee  
whereas for all $V\in \R$ with $\Im F_n(V) \ge T \ge \ 2  W$, with $W :=\max \{ Y_1, Y_2\}$,
\be \label{ratio2}
 \limsup_{n\to\infty} \left|   \frac{F_n(V) -F_n(V_1)} {F_n(V) -F_n(V_2)} \ - \   1 \right| \ \le  \ C \sqrt{W/T} \\   
\ee 
with a fixed constant.

 The   relation \eqref{ratio1} is based on the observation that the coordinates of $F_n(V_j) =: X_{n,j} +i Y_{n,j}$ satisfy:  
 \be  \notag  
 X_{n,j}^2 + (R_{n}-Y_{n,j})^2 = R_{n}^2 \, ,  \quad \mbox{and}  \quad \text{sign}\,  X_{n,j} = \sigma_j
 \ee 
 and hence 
 \be  \notag 
 X_{n,j} = \sigma_j \, \sqrt{2R_n Y_{n,j}} \, \sqrt{1 - Y_{n,j}/(2 R_n)} \, . 
 \ee 
 Since $R_n \to \infty$, equation \eqref{ratio1} easily follows.  In essence, these estimates reflect the flatness, and `horizontality' of the curve near its bottom, due to the asymptotical vanishing  of the circle's curvature.  Related considerations yield \eqref{ratio2}.    
 
The relations \eqref{cross_ratio}, \eqref{ratio1}, and \eqref{ratio2}, imply that for each fixed $T > 2 W$, all $V\in \R$ with $\Im F_n(V) \ge T$ at $n$ large enough: 
\be \label{rbound}
 \left| g(V; V_1,V_2,V_3)  -\frac{\sigma_2 \sqrt{Y_2}-\sigma_3\sqrt{Y_3}} {\sigma_1 \sqrt{Y_1}-\sigma_3\sqrt{Y_3}}   \right|  \ \le \  C  \sqrt{W/T}\, \, \,  
 \frac{\sigma_2 \sqrt{Y_2}-\sigma_3\sqrt{Y_3}} {\sigma_1 \sqrt{Y_1}-\sigma_3\sqrt{Y_3}} 
\ee 
The cross ratio of $V$ with fixed $\{V_j\}_{j=1,2,3}$ forms a continuous  $\dot \R $ valued function over $\dot \R$ which  determines V uniquely.  The intersection of the bound \eqref{rbound} over a  sequence of values of $T\to \infty$ implies that the set of points for which $\displaystyle\limsup_{n\to \infty} \Im F_n(V) = \infty$ consist of only the one point for which 
\be  \notag  
 g(V; V_1,V_2,V_3)  = \frac{\sigma_2 \sqrt{Y_2}-\sigma_3\sqrt{Y_3}} {\sigma_1 \sqrt{Y_1}-\sigma_3\sqrt{Y_3}}   \, .  
\ee
  \end{proof} 

\mbox{} \\[-3ex]  

Using Lemma~\ref{lem:Fv} we now turn to prove the zero-one law which was introduced  above. 

\begin{proof}[Proof of Theorem~\ref{thm:0-1}] 
For a convenient  reformulation of the condition in \eqref{eq:0-1}, let 
\be \label{def:F}
\mathcal {F}_x(z,\omega)  \ := \ \frac{-1}{ \Im z}  \, \langle\delta_x , \frac{1}{H(\omega) -z} \, \delta_x\rangle^{-1} \,  .
\ee
It satisfies: 
\be \label{F}
\Im \mathcal {F}_x(E+i\eta,\omega)   \ = \ 
\left |G(x,x;E+i\eta;\omega)  \right|^{-2}  \ 
\langle\delta_x , \frac{1}{[H(\omega) -E]^2+\eta^2} \, \delta_x\rangle
\ee  
For a full measure of energies $ E \in \R $, the limit $G(x,x;E+i0;\omega)  := \lim_{\eta \downarrow 0}   G(x,x;E+i\eta;\omega) $ exists and is finite non-zero for $\mathbb P$-almost every $\omega$.  (By Fubini's theorem the statement is equivalent to its reversed-order form, and that is implied by the  de la Vall\'ee-Poussin theorem.)   It follows that for $E $ in this full measure set: 
\be  \label {gamma_kappa} 
 \mathbb{P} \left(  \gamma_{x}(E) < \  \infty  \, \right)   \ = \   \mathbb{P} \left(  \kappa_{x}(E) < \  \infty  \, \right)  
\ee 
(for the quantities defined in \eqref{def_gamma}  and \eqref{kappa}).  
We now proceed restricting our attention to this regular set of energies $E$. 

Let us denote the event: 
\be  \notag  
 \mathcal K_x(E) \  := \  \left\{\omega \in \Omega \; \big| \,   
 \kappa_x(E;\omega) \ <  \  \infty   \right\} \,  . 
\ee 
Using Lemma~\ref{lem:Fv} we shall now prove that 
for every finite set $ \Lambda \subset \G$ with $\{x\} \subset \Lambda $, the regular conditional probability  of $\mathcal K_x $ conditioned on $\mathcal B_{\Lambda ^c} $ is either zero or one, or more explicitly:
\be \label{claim} 
\  \E\left[ \, 1_{\mathcal K_x}   \, |\,  \mathcal B_{ \Lambda ^c} \,\right] (\omega) 
 \ \stackrel{a.s.} {=} \  1_{\mathcal K_x}(\omega)  \, ,     
\ee 
where the quantity on the left is the conditional probability 
$\mathbb{P} { \left ( \, \mathcal K_x \, |\, \mathcal B_{\Lambda ^c} \, \right ) }(\omega) $, expressed 
 as an average in which  the variables $\{V(u)\}_{u\in \Lambda}$ are integrated out with their appropriate conditional distribution.   
  
 Lemma~\ref{lem:Fv} is applicable here due to \eqref{eq:rank2}.  It tells us that for any site $u \in \G$ the indicator  function $1_{\mathcal K_x}$ is almost surely constant as $V(u) $  is varied at fixed values of $V_{\neq u}$.  Due to the continuity of the conditional distribution, the set of $\omega$ for which $V(u;\omega) $ takes one of the exceptional values (of which there are at most two) is of zero probability.
  It follows that for any $u\in \G$:
\be  \label {induction_1}
 \E\left[ \, 1_{\mathcal K_x}   \, |\,  \mathcal B_{ \{u\} ^c} \,\right] (\omega)  \ 
 \stackrel{a.s.} {=}     \   1_{\mathcal K_x} (\omega)  \, .  
\ee  

Denoting by $\mathcal P_\Lambda $  the orthogonal projections in $L^2(\Omega,  \mathbb P)$  corresponding to the conditional expectations:  $\psi \mapsto  \E\left[ \, \psi    \, |\,  \mathcal B_{ \Lambda ^c} \,\right] $, Eq.~\eqref{induction_1} can  be  equivalently stated in the form: 
\be  \label {induction_1b}
 \E\left[ \,|  1_{\mathcal K_x}  - \mathcal P_{\{u\}}  1_{\mathcal K_x} |^2   \,\right]   \  = \ 0 .  
\ee  
By an elementary orthogonality bound, for any finite (and by implication also infinite) $\Lambda \subset \G$: 
\be  \notag  
 \E\left[ |\, 1_{\mathcal K_x}  - \mathcal P_\Lambda 1_{\mathcal K_x} |^2 \right] \ \le 
     \ \sum_{u\in \Lambda}   \E\left[ |\, 1_{\mathcal K_x}  - \mathcal P_{\{u\} } 1_{\mathcal K_x} |^2 \right] \, .
\ee 
 Therefore, \eqref{induction_1b} implies that the above quantity vanishes also for all finite $\Lambda$.   This proves \eqref{claim} for finite $\Lambda \subset \G$.   Through the martingale convergence theorem (or  through just an extension of   the above variance bounds)  we conclude  that  for any $ \Lambda \subset \G $:  
 \be  \label {cond_at_infty}
 \E\left[ \, 1_{\mathcal K_x}   \, |\,  \mathcal B_{\Lambda^c} \,\right] (\omega)  \ 
 \stackrel{a.s.} {=}     \   1_{\mathcal K_x} (\omega)  \, ,   
\ee  
and thus  $1_{\mathcal K_x} $ is measurable at infinity.  Under the assumption that the joint distribution of the potential has the $K$-property it follows that $ \mathbb{P} \left(  \kappa_{x}(E) < \  \infty  \, \right)  $ can equal only $0$ or $1$, and through  \eqref{gamma_kappa} this implies  
the claim \eqref{eq:0-1}. 
\end{proof} 

\subsection{A weak form of spectral dichotomy}  

Theorem~\ref{thm:0-1} along with with the Simon-Wolff criterion, Theorem~\ref{thm:SimWol},  imply: 
\begin{enumerate}[1.]
\item The real line is covered up to a zero measure subset by the disjoint union $ \mathcal{C}_x \cup \mathcal{P}_x $ of the non-random sets: 
\begin{align}\label{eq:defCx}
\mathcal{C}_x & := \left\{ E \in \R \, | \, \mathbb{P}\left(\gamma_x(E)=\infty\right) = 1 \right\}  \notag \\
\mathcal{P}_x & := \left\{ E \in \R \, | \, \mathbb{P}\left(\gamma_x(E)<\infty\right) = 1 \right\} 
\end{align}

\item With probability one $ \mathcal{C}_x $ serves as a support for the continuous spectrum of $ H(\omega) $ in the sense that:
\begin{enumerate}[i.]
\item $ \mu^{pp}_{\delta_{x}}( \mathcal{C}_x;\omega) \ = \  0 $ for $ \mathbb{P} $-almost all $ \omega $,
\item for any $\varepsilon >0$ and Lebsgue almost every  $E\in \mathcal C_x$ :  \be \label{cond_Q2}
\mu^c_{\delta_x}((E-\varepsilon, E+\varepsilon) ;\omega) \ > \  0  \, \quad \mbox{for $ \mathbb{P} $-almost every $ \omega $}. 
\ee   
\end{enumerate}
\item $ \mathcal{P}_x $ supports the pure-point spectrum of $ H(\omega) $ together with the real part of the resolvent set. In particular, 
\be  \mu^{c}_{\delta_{x}}( \mathcal{P}_x;\omega) \ = \  0 \qquad \mbox{ for $ \mathbb{P} $-almost all $ \omega $.} 
\ee  
\end{enumerate}

One may add to it that  the condition  $ \lim_{\eta\downarrow 0} \langle \delta_x , \frac{1}{H(\omega) -E-i \eta} \,  \delta_x\rangle  >   0 $,  in which existence of the limit is part of the statement, is also measurable at infinity -- a fact which has already been noted before (\cite[Cor.~1.1.3]{JLInv}).     
Denoting
\begin{align} \notag %\label{eq:defAx}
\mathcal{A}_x \  := \  \left\{ E \in \R \, | \, \mathbb{P}\left( 
 \Im \,  \langle \delta_x , \frac{1}{H-E-i 0} \,  \delta_x\rangle  >   0  \,   \right) = 1 \right\}  \, , 
\end{align} 
and observing that 
$  \mathcal{A}_x  \subset   \mathcal{C}_x$ (which follows from the spectral representation),  one may add to the above: 
\begin{enumerate}[1.]
\item[4.]  the support of the continuous spectrum admits also a non-random disjoint decomposition, with : 
\begin{enumerate} [i.]
\item 
$\mathcal{A}_x$ providing  
an  almost sure support of the absolutely continuous spectrum 
\item $\mathcal{C}_x  \backslash   \mathcal{A}_x $ serving as an   almost sure support of the singular continuous spectrum.  
\end{enumerate} 
\end{enumerate} 
For the last point (4)  we recall that the spectral measure's absolutely  continuous component  is 
\be \notag 
\mu^{ac}_{\delta_x}( dE) \ = \  \pi^{-1}  \Im \,  \langle \delta_x , \frac{1}{H-E-i 0} \,  \delta_x\rangle  \, dE \, . 
\ee 

Thus, the boosted Simon-Wolff criterion  implies a measure theoretic  form of spectral dichotomy.   It should however be appreciated that  $\mathcal{C}_x  $ and $\mathcal{P}_x $ generally do not coincide with the topological definition of the continuous and pure point spectra since these sets may in general  not be closed.   In this context, let us recall that the non-randomness of the topological supports  of the different spectra is also known quite generally for ergodic operators~\cite{CFKS,CaLa,PF}.

 \section{Resonant delocalization}  \label{sec:deloc_tunneling}  

Delocalization in the presence of extensive disorder turned  to be more elusive than Anderson localization.  Challenges remain open at both the level of compelling physical argument and of mathematical existence proofs of spectral regimes with extended states for random Schr\"odinger operators in finite dimensions.   
In particular, the  Bloch-Floquet mechanism for the formation of bands of ac spectrum for periodic operators  
is unstable with respect to even weak homogeneous disorder. That is drastically demonstrated by the one dimensional example, where at arbitrarily weak disorder the entire spectrum changes  to pure point~\cite{gmp,CFKS}.

 In the following, we build on one of the very few effective arguments for the formation of continuous spectrum through local resonances to have emerged in  mathematical works on delocalization.  
For simplicity, we focus on operators of the form \eqref{H} with independent and identically distributed (iid) potentials, whose single-site distribution is 
absolutely continuous with bounded density $ \rho $, and  assume homogeneity in the following sense.

\begin{definition}   A self-adjoint random operator  on a transitive graph $\G$
is said to be \emph{homogenous} if for each $T$ in a transitive collection of graph homomorphisms of $\G$   
the action of  $T$ on $\G$ lifts to measure-preserving transformation on $\Omega$   
under which $H(T \omega) $ is unitarily equivalent to $H(\omega)$ and satisfies: 
\be\label{eq:translcovariant}
\langle \delta_{T(x)} , F(H(\omega)) \delta_{{T(x)}} \rangle \ \stackrel{a.s.}{=}\  \langle \delta_{x}, F(H(T\omega)) \, \delta_{x} \rangle \, ,
\ee
for all bounded continuous $ F: \R \to \R $ and all $ x \in \G $.  
\end{definition} 

In this set-up, the mean density of states measure, which is generally defined  by 
$ \nu(I ) := \mathbb{E}\left[ \mu_{\delta_x}(I) \rangle \right] $, for Borel $ I \subset \R $, does not depend on $ x \in \G $, 
and is known to be absolutely continuous, with a bounded derivative  satisfying
\be  \notag 
n(E)  := \frac{\nu(dE)}{dE}  \ \le \ \| \rho\|_\infty \, .  
\ee
(The function $n(E)$ is referred to as the density of states;  the estimate is known as the Wegner bound, and its proof can be based on the spectral averaging principle \cite{SimWol,SimR1}).

By the  zero-one law of Theorem~\ref{thm:0-1} and the Simon-Wolff criterion, in order to establish the presence  of continuous spectrum within an interval $I$ it suffices to prove that for a positive measure set of energies $E\in I$: 
\be \label{ell2_div}
 \mathbb{P}\left\{ \lim_{\eta \downarrow 0} \sum_x |G(0,x;E+i\eta;\omega)|^2    = \infty \right\}  \ >\ 0   \, .  
 \ee 
 where the sum can also be written as $\langle\delta_0 , (H(\omega) -E]^2+\eta^2)^{-1} \, \delta_0\rangle$.  
The representation provided in \eqref{F} makes it clear that, with the  possible exception of a zero measure set of energies, the above limit diverges for each $E$ in the set
\be  \notag 
\mathcal{A}_0 = \left\{ E \in \R \, | \, \mathbb{P}\left( \Im  G(0,0;E+i0) \  > \ 0 \right) \neq  0 \right\} \,.
\ee 
Our discussion will therefore focus on conditions implying the divergence under the  assumption that $G(0,0;E+i0,\omega) $ is real for $ \mathbb{P} $-almost all $ \omega $.   (It is not difficult to see that  this event also satisfies a zero-one law for almost every $ E \in \R $, but this observation will not be needed here.)    \\  

For brevity of notation, from this point on we shall omit the explicit reference to the limit, denoting: 
\be \notag 
G(x,y;E)  \ := \ \lim_{\eta \downarrow 0} G(x,y;E+i\eta) \,. 
\ee 
The limit  exists for almost all $(E, \omega)$  simultaneously  for all $ x, y \in \G $ (as follows from the de la Vall\'ee-Poussin theorem).

\subsection{Rare but destabilizing resonances}

A simple lower bound for the quantity of interest is, 
for any $ R \in \mathbb{N} $: 
\begin{eqnarray}  \label{eq:lowergamma}
\gamma_0(E;\omega) &=&  \lim_{\eta \downarrow 0 }\,  \sum_{x \in \G} \left| G({0},x;E+i\eta,\omega) \right|^2   \geq \!  
\notag \\ 
 &\geq & \sum_{x: d({0},x) = R  } \left| G({0},x;E) \right|^2  
 \ =\  \left| G({0},{0};E) \right|^2 \!  \sum_{x : d({0},x)  = R  }  \left| g(x;E)\right|^2  \, .   
\end{eqnarray}
with (in terms introduced in \eqref{eq:rank2}): 
\begin{align}\label{eq:defratioresloc}
g(x;E)   := \frac{G({0},x;E)}{G({0},{0};E)} \ = \ \frac{\tau({0},x;E)}{V(x) -\sigma(x;E)} \, .  
\end{align} 

We shall now present a scenario under which a given site $ 0 $ is resonant at energy $E$ with a random collection of many other sites $ x \in \G$, for which  $ |g(x;E)|\approx 1 $.  
The resonances on which we shall focus are expressed in the 
 joint occurrences of the following three events: 
\begin{align}  \label{TEN}
& \mathcal{T}_x := \left\{\,  \left| \tau(0,x;E)\right|  \geq t(0,x;E) \, \right\} \,  \notag \\
& \mathcal{E}_x := \left\{ \, \left|V(x) - \Sigma(x;E) \right| \leq t(0,x;E) \, \right\} \,  \notag \\
& \mathcal{N}_x := \left\{ \, |V(0) - \sigma(0;E) | \geq  |\tau(x,0;E)|\, \right\} \, ,
\end{align}
with $\Sigma(x;E)$ defined by the rank-one relation \eqref{rank_1}, and $ t(x,y;E) \in(0,1]$  selected so that: 
\be \label{eq:typdecaytau}
\lim_{R\to \infty} \min_{d(x,y)=R} \mathbb{P} \left( \,  \left|\tau(x,y;E) \right| \ \geq \  t(x,y;E)  \, \right) \ =1   \, .  
\ee 
A related quantifier of the tunneling amplitude's distribution which will play a role is its truncated average   
\begin{align} \label{t_T}
T(x,y;E) \ &:=   \mathbb{E}\left[ \min\{ |   \tau(x,y;E) | , 1 \} \right]  \geq     c \,    t(x,y;E) \, ,
\end{align}
which for any $\varepsilon >0$ satisfies:   $T(x,y;E)  \ge (1-\varepsilon) \, t(x,y;E) $ at  sufficiently  large  distances, $d(x,y)\geq R_\varepsilon$. \\ 

The events $ \mathcal{T}_x $ and $ \mathcal{N}_x$   do not depend on $V(x)$, and in the situation  discussed below, they will be found to occur with asymptotically full probability as $d(x,0)\to \infty$.   In contrast,  $\mathcal{E}_x$ depends on the value of $V(x)$, and requires it to fall within an extremely narrow range of values  (near $\Sigma(x;E)$ which  depends on $V_{\neq x}$).     A key fact is that  
at any site $x\in \G $ for which the three condition are met for a given potential, i.e. under the event $  \mathcal{T}_x \cap \mathcal{E}_x  \cap  \mathcal{N}_x $, the ratio  which appears in~\eqref{eq:lowergamma} is bounded below:
\be  \label{eq:r_1_2}
| g(x, E) | \geq \frac{ 1 }{ 2 }  \,. 
\ee
For a proof let us note that \eqref{rank_1} with \eqref{eq:rank2} yield:
\be\label{eq:defSigmardel}
\Sigma(x;z) \ = \  \sigma(x;z) + \frac{ \tau(x,0;z)\, \,   \tau(0,x;z)}{V(0) - \sigma(0;z) } \, , 
\ee
which shows that under the event $ \mathcal{N}_x $: 
\be \label{313}
\left|  \Sigma(x;E)  - \sigma(x;E)  \right| 
\  \leq \     \left| \tau(0,x;E)\right| \, .
\ee  
The lower bound \eqref{eq:r_1_2} follows by combining \eqref{313}   with \eqref{eq:defratioresloc} and the conditions defining $ \mathcal{E}_x $ and $\mathcal{T}_x $.   \\   

Considering the  possible effects of resonant delocalization we  obtain  the following result concerning conditions  inducing continuous spectrum in specified energy regimes.  Essential role in the proof is plaid by  the  Simon-Wolff criterion of Theorem~\ref{thm:SimWol} boosted by the zero-one law of Theorem~\ref{thm:0-1}.   The assumed regularity assumption will be expressed invoking the convolution: 
\be 
 \quad (1_\varepsilon * \varrho)(v) :=  \frac{1}{2\varepsilon} \int_{-\varepsilon}^\varepsilon \varrho(v+w) \, dw  \, . 
\ee

\begin{theorem}   \label{thm:resdeloc}
Let  $ \G $ be an infinite transitive graph and $ H(\omega) =A + V(\omega) $  a homogeneous random   operator on $ \ell^2(\G) $
 with 
$V$ an iid potential whose single-site distribution is 
absolutely continuous with density satisfying:
\be\label{density_assump}
\varrho \leq c \, \inf_{\varepsilon \in (0,\delta)} (1_\varepsilon * \varrho) \,  
\ee
at some $\delta >0$ and $c <\infty$. 
 Then  for any Borel set  $I \subset \R$ and   vertex    $ 0 \in \G $  a sufficient condition for 
\be \label{eq:resDeloc}
 \qquad \mu_{\delta_{0}}^{(c)}(I; \omega)   \ >\  0  \,  
\ee
is that  the following conditions 
 {\bf A1-3}  hold for a positive Lebesgue measure subset of  values $ E  \in I  $.
\begin{enumerate} 
\item[{\bf A1}] The functions $T$ and $t$, of which the first is defined by \eqref{t_T} and $t(u,v;E)$ is taken to depend only on $d(u,v)$, satisfy :
\be  \label{S_diverges}
  \lim_{d(x,0) \to \infty}   T(x,0;E)    = 0 \, \quad \mbox{and} \quad 
 \lim_{R\to \infty}  \sum_{x: d(0,x) = R  } t(0,x;E)\   =\  \infty \, . 
 \ee

\item[{\bf A2}] 
At all large enough $ R $, for all $x$ with  $d(0,x) = R$:
  \be  \label{T_cond2}
%\mbox{for all $x$ with  $d(0,x) = R$:} \qquad  
\sum_{y: \, d(0,y) = R  } T(x,y;E) \  \leq \   C_T   \sum_{x: \, d(0,x) = R  } t(0,x;E)  
    \,  \quad \mbox{ with some $C_T < \infty$}. 
\ee
\item[{\bf A3}] The operator has a well defined and strictly positive  density of states at energy $E$:
\be  \label{n_E_def}
n(E) 
  \ = \  
\lim_{\eta \downarrow 0 } \frac{1}{\pi}  
  \mathbb{E}\left[ \Im \langle \delta_x,  \frac{1}{H - E -i \eta} \, \delta_x \rangle \right] 
  \in (0, \infty ) \,. 
  \ee
\end{enumerate}
\end{theorem}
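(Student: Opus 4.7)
The plan is to combine the boosted Simon--Wolff criterion (Proposition~\ref{thm:SimWol} together with Theorem~\ref{thm:0-1}) with a Paley--Zygmund second-moment argument on a count of resonant sites at distance $R$ from $0$. By the dichotomy developed in the preceding subsection it suffices to exhibit a positive Lebesgue measure subset $J \subseteq I$ on which $\mathbb{P}(\gamma_0(E) = \infty) > 0$: the zero-one law upgrades this to probability one, placing $J \subseteq \mathcal{C}_0$, and then the non-degeneracy~\eqref{cond_Q2} at (a.e.) point of $J$ delivers $\mu^{(c)}_{\delta_0}(I;\omega)>0$ almost surely. Energies in $\mathcal{A}_0 \cap I$ are already covered via~\eqref{F}, so the substantive work is on $E \in I \setminus \mathcal{A}_0$, where $G(0,0;E+i0)\in \R$ a.s. On that set \eqref{eq:lowergamma} and the resonance bound~\eqref{eq:r_1_2} yield
\[
\gamma_0(E;\omega) \ \geq \ \tfrac{1}{4}\, |G(0,0;E;\omega)|^2 \, N_R(\omega), \qquad N_R(\omega) \ := \!\!\sum_{x:\, d(0,x) = R}\!\! \mathbf{1}_{\mathcal{T}_x \cap \mathcal{E}_x \cap \mathcal{N}_x}(\omega),
\]
so it is enough to show $N_R$ is large with probability bounded away from zero as $R \to \infty$.

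For the first moment, I would fix $x$ with $d(0,x)=R$ and condition on $V_{\neq x}$. The events $\mathcal{T}_x$ and $\mathcal{N}_x$ do not involve $V(x)$ and have probability tending to one uniformly in such $x$: $\mathcal{T}_x$ by~\eqref{eq:typdecaytau}, and $\mathcal{N}_x$ because $T(x,0;E) \to 0$ (from~A1) forces $|\tau(x,0;E)|$ to be small with high probability, while $V(0)-\sigma(0;E) = 1/G(0,0;E)$ is a.s.~a finite nonzero real in the regime under consideration. Conditionally on $V_{\neq x}$ the quantity $\Sigma(x;E)$ is deterministic, and the regularity hypothesis~\eqref{density_assump} gives $\mathbb{P}(\mathcal{E}_x \mid V_{\neq x}) \gtrsim t(0,x;E)\,(1_t*\varrho)(\Re \Sigma(x;E))$. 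Averaging and invoking~A3 --- which identifies $\pi n(E)$ as the expectation of $\Im\langle\delta_x,(H-E-i0)^{-1}\delta_x\rangle$ and hence the averaged density of $1/(V(x)-\Sigma(x;E))$ --- produces $\mathbb{E}[\mathbf{1}_{\mathcal{E}_x}] \gtrsim n(E)\, t(0,x;E)$. Summing over $x$ at distance $R$, assumption~A1 yields $\mathbb{E}[N_R] \to \infty$.

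The second-moment estimate is the main obstacle. For distinct $x,y$ at distance $R$ I would use the rank-two Schur complement~\eqref{eq:rank2}, which expresses $\Sigma(x;E)$ and $\Sigma(y;E)$ in terms of the $\{x,y\}$-excised self-energies $\sigma(x;E), \sigma(y;E)$ with corrections bilinear in the pairwise tunneling $\tau(x,y;E)$. Integrating $V(x)$ and $V(y)$ out independently against the single-site density on the thin intervals defining $\mathcal{E}_x, \mathcal{E}_y$ (using \eqref{density_assump} again) yields the factored bound
\[
\mathbb{P}(\mathcal{E}_x \cap \mathcal{E}_y \mid V_{\{x,y\}^c}) \ \lesssim \ n(E)^2\, t(0,x;E)\, t(0,y;E) \;+\; n(E)\, t(0,x;E)\,\min\{|\tau(x,y;E)|,1\},
\]
whose average in the remaining disorder introduces $T(x,y;E)$ in the error term. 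Assumption~A2 is precisely what dominates the total error by $C\,(\mathbb{E}[N_R])^2$, giving $\mathbb{E}[N_R^2] \leq C\,(\mathbb{E}[N_R])^2$. Paley--Zygmund then delivers $\mathbb{P}(N_R \geq \tfrac{1}{2}\mathbb{E}[N_R]) \geq c > 0$ uniformly in $R$, so $\gamma_0(E) = \infty$ with positive probability; the zero-one law of Theorem~\ref{thm:0-1} upgrades this to probability one, completing the reduction.
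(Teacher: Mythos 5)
Your proposal follows essentially the same route as the paper: reduce to a lower bound on the resonance count $N_R$ via the Paley--Zygmund inequality, establish the first moment lower bound via the stochastic delta-function principle and assumptions A1, A3 (cf. Lemmas~\ref{lem:lowerbd} and~\ref{lem:rho_f}), control the second moment via the rank-two bound of Lemma~\ref{lem:ffm2cor} and assumption A2, and then boost the resulting positive-probability divergence of $\gamma_0(E)$ to probability one by Theorem~\ref{thm:0-1} and conclude via the Simon--Wolff criterion. Aside from minor imprecisions in constants (e.g.\ $\|\varrho\|_\infty$ vs.\ $n(E)$ in the two-site bound) and leaving implicit the argument that $|\tau(x,y;E)|=|\tau(y,x;E)|$ on the real-resolvent set, this is the paper's proof.
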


Before we turn to the proof let us comment on the feasibility of the assumed
conditions {\bf A1-2}, as it is illustrated on  
the case where $\G$ is a regular tree graph   
of constant degree $K+1$ and $ A $ is the graph's adjacency operator.   
 
On tree graphs the tunneling amplitude $\tau(x,y;E)$  factorizes into 
a  product of similar but shifted random variables, which can be associated with 
the decomposition of arbitrary paths from $x$ to $y$ into a sequence of `no-return' steps.   
One then finds \cite[Thm.~3.2]{AiWa_resdeloc} that both $t(x,y;E)$ and $T(x,y;E)$ decay exponentially: 
\begin{eqnarray} 
t(x,y;E) \ \leq \  C_0 \, e^{- \mathcal{L}_0 (E)\, \dist (x,y)} \notag \\ 
\\[-2ex] 
T(x,y;E) \ \leq \  C_1 \, e^{-  \mathcal{L}_1 (E) \dist (x,y) }\,  . \notag 
\end{eqnarray}  
with $  \mathcal{L}_0(E) $ which can be identified as the Lyapunov exponent of a transfer-matrix driven dynamics, and $  \mathcal{L}_1 (E) \ \leq  \   \mathcal{L}_0 (E)  $ by \eqref{t_T}.
In that situation condition {\bf A1} requires 
\be \label{Lyap<logK}
\mathcal L _0(E) \ < \ \log K  \, ,
\ee 
which is satisfied when the rate of typical tunneling decay is below the rate of  the (geometric) surface growth.   
On regular tree graphs  also assumption~{\bf A2} is valid when  \eqref{Lyap<logK} holds.  This follows from the inequality~\cite[Thm.~3.2]{AiWa_resdeloc}
\be \label{Lyap_1}
   \mathcal L _1(E) \geq  \log \sqrt{K}  
   \ee
and the hyperbolic geometry of the tree: for each $x$ with $\dist(x,0)=R$ most of the $R$ sphere is asymptotically   further from $x$ than from the center $0$ by a factor which asymptotically can be chosen arbitrarily close to $2$.   Due to this,  the surface average (over $y$) of $T(x,y;E)$ decays at asymptotically twice the decay rate of the surface average of  $T(0,y;E)$ (the exact calculation is elementary).\\

The zero-disorder values of the 
the above pair of   exponents are, for regular tree graphs:
\be 
\mathcal{L}_0(E) \ = \ \mathcal{L}_1(E) \ = \ 
\begin{cases}
       \log \sqrt{K}  & \mbox{for $ E\in  [-2\sqrt K, 2 \sqrt K]  = \sigma(A) $}   \\[2ex]    
              \log K   & \mbox{for $ |E| = [-(K+1), K+1] $} 
              \end{cases}
\ee 
These  expresses the facts that: i) there are no fluctuations, ii) over the spectrum of $A$ the $\ell^2$-sum 
$\sum_{x:\, \dist(x,0)=R} |G_0(0,x;E)|^2$ tends to a finite constant  as $R\to \infty$ iii)  the Lyaponov exponent grows as the distance of $E$ from the spectrum increases. 
Thus, in this case 
sufficient  Lyaponov exponent bounds are in place, and partial continuity arguments for the exponents at positive disorder can be developed, making Theorem~\ref{thm:resdeloc} applicable  at weak disorder throughout --and even beyond-- the spectrum of $A$~\cite{AiWa_resdeloc}.\\

To establish Theorem~\ref{thm:resdeloc} it suffices to prove the following estimate for  
\be
N_R(E) := \sum_{x:\,   d(x,0) = R } \indfct_{ \mathcal{T}_x \cap  \mathcal{E}_x \cap \mathcal{N}_x} \,   
\ee
which counts the number of sites $x\in \G$ at which in a given realization there is a strong enough resonance at energy $E$   to be noted at $0$,  in the sense defined by the  conditions stated in \eqref{TEN}.

\begin{lemma}\label{lem:condrargresd}
Let   $H(\omega)$ be a random operator satisfying the assumptions of Theorem~\ref{thm:resdeloc}, and   $I\subset \R$   a Borel set such that for almost every $ E \in  I $   the conditions  
{\bf A1-A3} are satisfied and also
 \begin{equation}\label{eq:neq0}
  \mathbb{P}\left( \Im \Sigma(0;E) \  {=} \  0 \right) \ = \ 1  \,  .  
 \end{equation}
Then for any $M<\infty$, and almost every $E\in I$: 
\be\label{eq:aimforN}
\liminf_{R\to \infty}  \,  \mathbb{P}\left( N_R(E)  \geq M \right) \  \geq \ p_0(E) \, , 
 \ee
 with some  $p_0(E) >0$ which does not depend on $M$.  
\end{lemma}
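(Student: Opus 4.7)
The plan is to establish \eqref{eq:aimforN} via the second moment (Paley-Zygmund) method applied to $N_R(E)$: show that for almost every $E \in I$ one has $\mathbb{E}[N_R(E)] \to \infty$ as $R \to \infty$, while $\mathbb{E}[N_R(E)^2] \leq C(E)\, \mathbb{E}[N_R(E)]^2$ uniformly in $R$. Paley-Zygmund then gives, for any $\alpha \in (0,1)$ and $R$ large enough,
\begin{equation*}
\mathbb{P}\bigl(N_R(E) \geq \alpha\, \mathbb{E}[N_R(E)]\bigr) \geq (1-\alpha)^2 / C(E) \,,
\end{equation*}
and since $\mathbb{E}[N_R(E)] \to \infty$ the event on the left eventually contains $\{N_R(E) \geq M\}$, yielding $p_0(E) := (1-\alpha)^2/C(E)$ independent of $M$.

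\textbf{First moment.} Fix $x$ with $d(0,x) = R$. The events $\mathcal{T}_x$ and $\mathcal{N}_x$ are $\sigma(V_{\neq x})$-measurable, so conditioning on $V_{\neq x}$ and using that $\Sigma(x;E)$ is then a known constant gives
\begin{equation*}
\mathbb{P}(\mathcal{T}_x \cap \mathcal{E}_x \cap \mathcal{N}_x \mid V_{\neq x}) \ = \ \mathbf{1}_{\mathcal{T}_x \cap \mathcal{N}_x} \int_{\Sigma(x;E) - t(0,x;E)}^{\Sigma(x;E) + t(0,x;E)} \varrho(v)\, dv \,.
\end{equation*}
For $R$ large enough that $t(0,x;E) < \delta$ (which follows from \textbf{A1}), the regularity assumption \eqref{density_assump} bounds the integral below by $(2/c)\, t(0,x;E)\, \varrho(\Sigma(x;E))$. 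Taking expectations and using $\mathbb{P}(\mathcal{T}_x) \to 1$ (by \eqref{eq:typdecaytau}) and $\mathbb{P}(\mathcal{N}_x) \to 1$ (since $T(x,0;E) \to 0$ forces $\tau(x,0;E) \to 0$ in probability, while $V(0) - \sigma(0;E)$ has a continuous distribution with no atom at $0$), one has
\begin{equation*}
\mathbb{E}[\mathbf{1}_{\mathcal{T}_x \cap \mathcal{E}_x \cap \mathcal{N}_x}] \ \geq \ (1 - o(1))\, \frac{2}{c}\, t(0,x;E)\, \mathbb{E}[\varrho(\Sigma(x;E))] \,.
\end{equation*}
Under \eqref{eq:neq0}, the standard rank-one identity for the density of states gives $\mathbb{E}[\varrho(\Sigma(x;E))] = n(E)$, and homogeneity makes this independent of $x$. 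Summing over the sphere and invoking \textbf{A1}, \textbf{A3} yields $\mathbb{E}[N_R(E)] \to \infty$.

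\textbf{Second moment.} For $x \neq y$ on the sphere, condition on $V_{\neq x, \neq y}$, under which the rank-2 data $\sigma(x;E), \sigma(y;E), \tau(x,y;E), \tau(y,x;E)$ are determined and $\mathcal{E}_x \cap \mathcal{E}_y$ becomes a coupled system on $(V(x), V(y))$ via \eqref{eq:defSigmardel}. A change of variables from $(V(x), V(y))$ to $(V(x) - \Sigma(x;E), V(y) - \Sigma(y;E))$, combined with the density bound $\varrho \leq \|\varrho\|_\infty$ and the regularity \eqref{density_assump}, will produce a bound of the schematic form
\begin{equation*}
\mathbb{P}(\mathcal{E}_x \cap \mathcal{E}_y \mid V_{\neq x, \neq y}) \ \leq \ C\, t(0,x;E)\, t(0,y;E)\, \varrho(\sigma(x;E))\, \varrho(\sigma(y;E)) \ + \ C'\, \min\{|\tau(x,y;E)|, 1\}\, \bigl( t(0,x;E) + t(0,y;E) \bigr) \,.
\end{equation*}
Taking expectations, the first term contributes $C\, n(E)^2 \bigl(\sum_x t(0,x;E)\bigr)^2$, and the correction uses $\mathbb{E}[\min\{|\tau(x,y;E)|, 1\}] = T(x,y;E)$ together with \textbf{A2} to bound $\sum_{x,y} t(0,x;E)\, T(x,y;E) \leq C_T \bigl(\sum_x t(0,x;E)\bigr)^2$. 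The diagonal ($x=y$) contributes $\mathbb{E}[N_R(E)]$, which is $o(\mathbb{E}[N_R(E)]^2)$ since the first moment diverges. Hence $\mathbb{E}[N_R(E)^2] \leq C'(E)\, \mathbb{E}[N_R(E)]^2$, completing the Paley-Zygmund bootstrap.

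\textbf{Main obstacle.} The delicate point is the rank-2 change-of-variables in the joint probability $\mathbb{P}(\mathcal{E}_x \cap \mathcal{E}_y \mid V_{\neq x, \neq y})$: the denominator $V - \sigma$ in \eqref{eq:defSigmardel} can be small, making the Jacobian nearly degenerate, so one must split the integration domain according to the size of $V(x) - \sigma(x;E)$ and $V(y) - \sigma(y;E)$ relative to $|\tau(x,y;E)|$. The truncation $\min\{|\tau|,1\}$ built into the definition \eqref{t_T} of $T$ is exactly what allows the correction term to close under \textbf{A2}, and the density regularity \eqref{density_assump} is what ensures the first and second moment estimates share a common normalization proportional to $n(E)$. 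Once these two ingredients are in hand, the rest of the argument is a mechanical Paley-Zygmund conclusion.
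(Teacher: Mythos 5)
Your overall strategy---first moment lower bound, second moment upper bound, Paley--Zygmund---is exactly the paper's, and several of your technical observations (conditioning on $V_{\neq x}$ to isolate $\mathcal{E}_x$, the role of the regularity assumption~\eqref{density_assump}, the near-degenerate Jacobian as the obstacle in the pair bound) are on target. But two load-bearing steps are asserted rather than proved, and each hides a genuine subtlety.

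First, the claim ``the standard rank-one identity for the density of states gives $\mathbb{E}[\varrho(\Sigma(x;E))] = n(E)$'' is not a standard identity and is not an exact equality. The identity one actually has is $n(E) = \lim_{\eta\downarrow 0}\pi^{-1}\,\mathbb{E}\bigl[\Im\,(V(x)-\Sigma(x;E+i\eta))^{-1}\bigr]$. Passing from this to a statement about $\varrho(\Sigma(x;E))$ requires interchanging the $\eta\downarrow 0$ limit with the expectation, and the approximation of the delta function here is delicate because $\Sigma(x;E+i\eta)$ converges to its boundary value only in distribution, with the imaginary part decaying at a rate that varies across realizations. The paper isolates this step as the ``stochastic delta function principle'' (Lemma~\ref{lem:rho_f}), which yields only the one-sided bound $\liminf_{d(x,0)\to\infty}\mathbb{P}(\mathcal{E}_x)/(2t(0,x;E)) \geq n(E)$, with a constant coming from~\eqref{density_assump} absorbed along the way---not the clean equality you write. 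You should either cite such a lemma or prove the interchange of limits directly.

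Second, the second-moment bound is left schematic precisely at the point where all the work lies. The paper's route (Lemma~\ref{lem:ffm2cor}) is a deterministic geometric bound on the Lebesgue measure of the set in the $(V(x),V(y))$-plane where both $|G(x,x)|$ and $|G(y,y)|$ are large, obtained by analyzing the level sets of the rank-two Schur-complement parametrization directly; the ``near-degenerate Jacobian'' problem you flag is resolved there by showing the solution set is confined to a thin neighborhood of two hyperbola branches, of area $\lesssim \|\varrho\|_\infty^2\sqrt{ab}\,(\sqrt{ab}+\sqrt{|\tau(x,y)\tau(y,x)|})$. Your schematic first term involving $\varrho(\sigma(x;E))\varrho(\sigma(y;E))$ would not produce $n(E)^2$---$n(E)$ is tied to $\varrho(\Sigma)$, not $\varrho(\sigma)$---and the second-moment constant in the paper is in fact expressed through $\|\varrho\|_\infty^2$, with the Paley--Zygmund probability ultimately proportional to $n(E)^2/\|\varrho\|_\infty^2$. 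One further missing observation: under~\eqref{eq:neq0}, $G(\,\cdot\,,\,\cdot\,;E+i0)$ is self-adjoint on the relevant subspace, which gives $|\tau(x,y;E)|=|\tau(y,x;E)|$; this is needed to identify $\mathbb{E}[\min\{\sqrt{|\tau(x,y;E)\tau(y,x;E)|},1\}]$ with $T(x,y;E)$ before invoking~{\bf A2}.
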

%

%To establish Theorem~\ref{thm:resdeloc} it suffices to prove the following 
%
%\begin{lemma}\label{lem:condrargresd}
%Under the assumptions of Theorem~\ref{thm:resdeloc}  for Lebesgue-almost all $ E \in  \R $ at which  
%{\bf A1-3} hold, if also
% \begin{equation}\label{eq:neq0}
%  \mathbb{P}\left( \Im \Sigma(0;E) \  {=} \  0 \right) \ = \ 1  \,  , 
% \end{equation}
%then
%\be
%N_R(E) := \sum_{x:\, \floor{  d(x,0) }= R } \indfct_{ \mathcal{T}_x \cap  \mathcal{E}_x \cap \mathcal{N}_x} \, . 
%\ee
%(i.e., $N_R(E) $ counts  the number of sites $x\in G$ which in a given realization are resonant with $0$ in the sense defined by the three conditions,)   satisfies,  for any $M<\infty$ and all $R$ large enough: 
%\be\label{eq:aimforN}
%  \mathbb{P}\left( N_R(E)  \geq M \right) \  \geq \ p_0 \, , 
% \ee
% with some  $p_0>0$ which does not depend on $M$.  
%\end{lemma}
%%

Once this is proven,   Theorem~\ref{thm:resdeloc} readily follows:  
it is  already known that  the $\ell^2$-sum in \eqref{ell2_div}  diverges 
at energies at which    $  \Im \Sigma(0;E) \  {\neq} \  0 $,  
 and the Lemma allows to conclude  positive probability of the sum's divergence (under the assumptions {\bf A 1-3}) regardless of \eqref{eq:neq0}.   
The zero-one law of Theorem~\ref{thm:0-1} allows then to raise the probability in \eqref{ell2_div}  to one, and the rest follows by the    
Simon-Wolff criterion (Theorem~\ref{thm:SimWol}).\\

Let us then turn to the proof of Lemma~\ref{lem:condrargresd}.  

\subsection{The second-moment proof}   

Lemma~\ref{lem:condrargresd} is proved below through a two step argument: the first is to show that   the mean (i.e. first moment of $N_R$)  diverges (as expressed in \eqref{eq:avNdiverges}).   
Then, the second-moment test will be used to establish a uniformly positive lower bound on the probability the random variables $N_R$  assume values comparable with their mean.  The alternative which  needs to  be ruled out here is that the mean diverges only due to very large contributions of very rare events, while the typical  range of values (e.g. the median) remains finite.   
A convenient tool for such purpose  is the Paley-Zygmund inequality, which states that
\begin{equation}\label{eq:PZ}
\mathbb{P}\left( \, N\geq \theta \, \mathbb{E}\left[N\right] \, \right) \geq (1-\theta)^2 \, \frac{ \mathbb{E}\left[N\right]^2}{ \mathbb{E}[N^2]} \, . 
\end{equation}
for any random variable~$ N $ and any $ \theta \in (0,1) $. 
To  employ it, one needs to derive  a lower bound on $ \mathbb{E}\left[N_R\right] $ and an upper bound on $ \mathbb{E}[N_R^2]$.

\subsubsection*{The lower bound} 

The first of the two steps outlined above is: 
 
\begin{lemma}\label{lem:lowerbd} 
Under the assumptions of Theorem~\ref{thm:resdeloc}  for Lebesgue-almost all $ E \in  \R $ at which \eqref{eq:neq0}  and  {\bf A1-3} hold: 
\be\label{eq:avNdiverges} 
\lim_{R\to \infty} \mathbb{E}\left[    N_R\right] = \infty \, .
\ee 
\end{lemma}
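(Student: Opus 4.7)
The plan is to lower-bound $\mathbb{E}[N_R] = \sum_{x:\, d(0,x)=R} \mathbb{P}(\mathcal{T}_x \cap \mathcal{E}_x \cap \mathcal{N}_x)$ by a fixed multiple of $n(E) \sum_{x:\, d(0,x)=R} t(0,x;E)$ and then invoke the divergent series in \textbf{A1}. Throughout, homogeneity combined with the hypothesis \eqref{eq:neq0} gives $\Im \Sigma(x;E+i0) = 0$ $\mathbb{P}$-almost surely for every vertex~$x$.

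First I would exploit that $\mathcal{T}_x$ and $\mathcal{N}_x$ are measurable with respect to $V_{\neq x}$, while $\mathcal{E}_x$ is a window of half-width $t_R := t(0,x;E)$ for $V(x)$ centered at $\Sigma(x;E)$. Conditioning on $V_{\neq x}$ and applying the density assumption \eqref{density_assump}, valid once $t_R < \delta$ (which holds for $R$ large enough since $T(x,0;E) \to 0$ and $T \geq c\, t$ by \eqref{t_T}), yields
\begin{equation*}
\mathbb{P}(\mathcal{T}_x \cap \mathcal{E}_x \cap \mathcal{N}_x) \;=\; \mathbb{E}\!\left[\mathbf{1}_{\mathcal{T}_x \cap \mathcal{N}_x} \int_{\Sigma(x;E) - t_R}^{\Sigma(x;E) + t_R} \varrho(v)\, dv\right] \;\geq\; \frac{2 t_R}{c} \, \mathbb{E}\!\left[\mathbf{1}_{\mathcal{T}_x \cap \mathcal{N}_x}\, \varrho(\Sigma(x;E))\right].
\end{equation*}

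Next I would identify $\mathbb{E}[\varrho(\Sigma(x;E))]$ with $n(E)$. Starting from $G(x,x;E+i\eta)^{-1} = V(x) - \Sigma(x;E+i\eta)$, integrating out $V(x)$ conditional on $V_{\neq x}$ gives
\begin{equation*}
\mathbb{E}\!\left[\Im G(x,x;E+i\eta) \,\big|\, V_{\neq x}\right] \;=\; \pi\,\big(P_{\eta + \Im\Sigma(x;E+i\eta)} * \varrho\big)\!\big(\Re\Sigma(x;E+i\eta)\big),
\end{equation*}
with $P_\epsilon$ the Poisson kernel. Under \eqref{eq:neq0}, the Poisson kernel concentrates as $\eta \downarrow 0$; the uniform bound $\varrho \leq \|\varrho\|_\infty$ enables bounded convergence, so the outer expectation of the right-hand side tends to $\pi \,\mathbb{E}[\varrho(\Sigma(x;E))]$, and by \textbf{A3} we obtain $n(E) = \mathbb{E}[\varrho(\Sigma(x;E))] \in (0,\infty)$. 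It remains to argue $\mathbb{P}(\mathcal{T}_x \cap \mathcal{N}_x) \to 1$ uniformly on the sphere of radius $R$: for $\mathcal{T}_x$ this is precisely \eqref{eq:typdecaytau}; for $\mathcal{N}_x$, Markov's inequality applied to $\min\{|\tau(x,0;E)|,1\}$ combined with $T(x,0;E)\to 0$ forces $|\tau(x,0;E)|\to 0$ in probability, while $V(0) - \sigma(0;E) \neq 0$ almost surely since $V(0)$ has a continuous distribution independent of $\sigma(0;E)$. Together with $\varrho \leq \|\varrho\|_\infty$, bounded convergence yields $\mathbb{E}[\mathbf{1}_{\mathcal{T}_x \cap \mathcal{N}_x} \varrho(\Sigma(x;E))] \to n(E)$ uniformly on the sphere, so for every $x$ on a sufficiently large sphere,
\begin{equation*}
\mathbb{P}(\mathcal{T}_x \cap \mathcal{E}_x \cap \mathcal{N}_x) \;\geq\; \frac{n(E)}{c}\, t(0,x;E),
\end{equation*}
and summing over the sphere closes the proof via the divergent series in \textbf{A1}.

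The main obstacle I expect is the identification $\mathbb{E}[\varrho(\Sigma(x;E))] = n(E)$: the pointwise Poisson-kernel limit $(P_{\eta+\Im\Sigma}*\varrho)(\Re\Sigma) \to \varrho(\Re\Sigma)$ requires $\Re\Sigma(x;E)(\omega)$ to be a Lebesgue point of $\varrho$, which holds $\mathbb{P}$-a.s.\ by Fubini combined with Lebesgue differentiation applied to $\varrho \in L^\infty \cap L^1$; the $\omega$-dependent rate at which $\Im\Sigma(x;E+i\eta) \to 0$ is then absorbed by the $L^\infty$ bound on $\varrho$ through bounded convergence.
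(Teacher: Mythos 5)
Your overall plan mirrors the paper's: bound $\mathbb{P}(\mathcal{T}_x\cap\mathcal{E}_x\cap\mathcal{N}_x)$ from below by a constant multiple of $n(E)\, t(0,x;E)$, treat $\mathcal{T}_x\cap\mathcal{N}_x$ as an asymptotically sure event that can be incorporated at the cost of an $o(t(0,x;E))$ error, and sum over the $R$-sphere using \textbf{A1}. The handling of $\mathcal{T}_x$ and $\mathcal{N}_x$ is fine.

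The gap is in the central identification $\mathbb{E}[\varrho(\Sigma(x;E))] = n(E)$, and more precisely in your justification that $\Re\Sigma(x;E)(\omega)$ is $\mathbb{P}$-a.s.\ a Lebesgue point of $\varrho$. Lebesgue differentiation tells you that Lebesgue-a.e.\ point of $\R$ is a Lebesgue point of $\varrho$; to transfer this to a $\mathbb{P}$-a.s.\ statement about the random variable $\Re\Sigma(x;E)$ you would need its distribution to be absolutely continuous with respect to Lebesgue measure, and nothing in the hypotheses gives you that. The law of $\Sigma(x;E)$ at a fixed energy $E$ is a complicated functional of $V_{\neq x}$; it can be atomic or otherwise singular, and can in principle be concentrated on a Lebesgue-null set of non-Lebesgue points of $\varrho$. "Fubini plus Lebesgue differentiation" does not close this, since Fubini would require an auxiliary Lebesgue integration variable, and there is none here — $E$ is fixed throughout the lemma. (There is also a small slip in the Poisson-kernel parameter: since $\Sigma(x;E+i\eta)$ already contains the $+z$ term of the Schur complement, $\Im\Sigma(x;E+i\eta)\geq\eta$ and the kernel is $P_{\Im\Sigma(x;E+i\eta)}$, not $P_{\eta+\Im\Sigma(x;E+i\eta)}$. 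This is cosmetic compared to the Lebesgue-point issue.)

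The trouble actually enters one step earlier, when you apply \eqref{density_assump} to replace $(1_{t_R}*\varrho)(\Sigma(x;E))$ by $\varrho(\Sigma(x;E))/c$. That move forces you to control pointwise values of $\varrho$ at a random location, which is exactly the delicate thing. The paper's way around it is to stay at the level of box-kernel averages: it proves and invokes the ``stochastic delta function principle'' (Lemma~\ref{lem:rho_f} in the appendix), which gives directly
\[
n(E)\;=\;\lim_{\eta\downarrow 0}\,\mathbb{E}\Big[\Im\frac{\pi^{-1}}{V(x)-\Sigma(x;E+i\eta)}\Big]\;\leq\; c(\delta)\,\inf_{\varepsilon\in(0,\delta)}\frac{1}{2\varepsilon}\,\mathbb{P}\big(|V(x)-\Sigma(x;E)|\leq\varepsilon\big),
\]
so that taking $\varepsilon=t(0,x;E)$ yields $\mathbb{P}(\mathcal{E}_x)\gtrsim n(E)\,t(0,x;E)$ without ever evaluating $\varrho$ at a point. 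In that lemma the Poisson-to-box comparison is performed via the Fourier representation of $(1_\varepsilon*\varrho)$ convolved with the Poisson kernel, which produces a bounded \emph{continuous} test function so that weak convergence of $\Sigma(x;E+i\eta)$ to the real-valued $\Sigma(x;E)$ suffices; no Lebesgue-point claim about the random boundary value is ever made. To repair your proof you should either keep $(1_{t_R}*\varrho)(\Sigma(x;E))$ and prove (or cite) that $\mathbb{E}[(1_\varepsilon*\varrho)(\Sigma(x;E))]\geq n(E)/c(\delta)$ uniformly in small $\varepsilon$, or simply invoke Lemma~\ref{lem:rho_f} at the analogous juncture.
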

\begin{proof}
Let us first consider the probability of the rare event  $ \mathbb{P}\left(\mathcal{E}_x \right)$.
Applying  the shift covariance, one gets   
\begin{eqnarray}  \label{E_ZP} 
\liminf_{ d(x,0)\to \infty   } \  \frac{  \mathbb{P}\left(   \mathcal{E}_x   \right) }{2 t(0,x;E)} &  =  &     
  \liminf_{\varepsilon \downarrow 0 } \frac{1}{2\varepsilon}  
  \mathbb{E}\left[ \int \indfct[|v - \Sigma(x;E)|\leq \varepsilon ] \, \varrho(v) dv\right]   \, . 
\end{eqnarray}
On the other hand, by the rank one perturbation formula~\eqref{rank_1}, the density of states function which is given by \eqref{n_E_def} can be presented as
\be \label{E_Za}
n(E)   \ = \ 
   \lim_{\eta \downarrow 0 } \frac{1}{\pi}  
  \mathbb{E}\left[ \Im \frac{1}{V(x) - \Sigma(x;E+i \eta) }\right] \,. 
\ee
Since 
$\lim_{\eta \to 0} \Sigma(x;E+i \eta) = \Sigma(x;E) \in \R $  (in the distributional sense), the delta function principle which is stated here more explicitly in Lemma~\ref{lem:rho_f} is applicable, and it implies a relation between the  last expressions in  \eqref{E_ZP} and \eqref{E_Za}, with the consequence that  
\be \label{E_Z} 
\liminf_{  d(x,0) \to \infty   } \  \frac{  \mathbb{P}\left(   \mathcal{E}_x   \right) }{2 t(0,x;E)} \ \geq   \  n(E) \, .  
\ee

\medskip 

Let now $\{  \mathcal{Z}_x \}_{ x \in \G} $ be family of events for which:
\begin{enumerate}[1.]
\item  $ \mathcal{Z}_x $ is independent of $ V(x) $ for all $ x \in \G $, and
\item $ \displaystyle \lim_{R \to \infty} \max_{x: d(x,0)=R}   \mathbb{P}\left( \mathcal{Z}_x^c  \right) = 0 $, 
\end{enumerate} 
and  $ E \in  \R $ an energy at which \eqref{eq:neq0}  and  {\bf A1-3} hold.   
One may evaluate the joint probability by first conditioning on $ V_{\neq x } $:
\begin{align}  
\mathbb{P}\left(  \mathcal{E}_x \cap  \mathcal{Z}_x^c\right) \ & = \mathbb{E}\left[ \indfct_{ \mathcal{Z}_x^c} \mathbb{P}\left( \mathcal{E}_x \, | \, V_{\neq x} \right) \right]
= \mathbb{E}\left[ \indfct_{ \mathcal{Z}_x^c} \int \indfct[|v - \Sigma(x;E)|\leq t(0,x;E)] \, \varrho(v) dv\right] \notag \\
& \leq \ 2 t(0,x;E) \, \|\varrho \|_\infty \, \mathbb{P}\left( \mathcal{Z}_x^c\right) \, . \notag
\end{align}
This implies $ \lim_{ d(x,0) \to \infty   }  \mathbb{P}\left(  \mathcal{E}_x \cap  \mathcal{Z}_x^c\right)/  t(0,x;E) = 0 $. Since $ \mathbb{P}\left(   \mathcal{E}_x \cap  \mathcal{Z}_x  \right) =  \mathbb{P}\left(   \mathcal{E}_x   \right) -  \mathbb{P}\left(   \mathcal{E}_x \cap  \mathcal{Z}_x^c  \right)  $, we thus get the following extension of \eqref{E_Z}:
\be \label{E_Z_b} 
\liminf_{ d(x,0) \to \infty   } \  \frac{  \mathbb{P}\left(   \mathcal{E}_x \cap  \mathcal{Z}_x  \right) }{2 t(0,x;E)} \ =  \  
\liminf_{ d(x,0) \to \infty   } \  \frac{  \mathbb{P}\left(   \mathcal{E}_x   \right) }{2 t(0,x;E)} \ \geq   \  n(E) \, . 
\ee

To verify that the above  applies to $ \mathcal{Z}_x = \mathcal{T}_x \cap \mathcal{N}_x $, we note that by its definition this event is independent of $ V(x) $. To check the other assumption let us note the following:
\begin{enumerate}[1.]
\item  By our selection of the function $ t(0,x)  $,  for all $ x \in \G $:
\begin{align}\label{eq:BoundTbelow}
\lim_{R\to \infty} \min_{d(x,0)=R} \mathbb{P}\left( \mathcal{T}_x^c\right)  \  = \ 1   \, .
\end{align}
 \item Since the random variable $ V(0) $ is independent of 
 $ \sigma(0;E) $ and its distribution is  absolutely continuous with density $ \varrho \in L^\infty(\R)  $, we also have:
 \be\label{eq:BoundNbelow}
 \mathbb{P}\left(  \mathcal{N}_x ^c \right) \leq \mathbb{E}\left[ \min\{ 2 \| \varrho \|_\infty  |\tau(x,0;E)|, 1 \} \right]   \, . 
 \ee
The right side turns to zero uniformly for all $x$ with $\dist(0,x) = R$ in the limit $ R \to \infty $ by the assumption ${\bf A2}$. 
\end{enumerate}

By \eqref{E_Z} we may now conclude that for all  $ R \in  \mathbb{N} $ sufficiently large and all $ x$ with $ d(x,0) = R $,
\be  \notag 
 \mathbb{P}\left( \mathcal{T}_x \cap  \mathcal{E}_x \cap  \mathcal{N}_x  \right)  \geq \frac{n(E)}{2} \, t(0,x;E) \, ,
\ee
and hence
\be \label{N_mean}
\mathbb{E}\left[    N_R\right] \  = \ \sum_{x:  d(x,0)  = R } \mathbb{P}\left( \mathcal{T}_x \cap  \mathcal{E}_x \cap  \mathcal{N}_x  \right) \ \geq \  \frac{n(E)}{2}   \ \sum_{x:  d(x,0) = R }  t(0,x) \, .
\ee
The claimed divergence \eqref{eq:avNdiverges}, for  $R\to \infty$,  is then  implied by~\eqref{S_diverges} of  assumption {\bf A1}. 
\end{proof} 

\medskip 

\subsubsection*{The upper bound} 

For the second moment (upper) bound we start from: 
 \begin{align} 
 \notag 
\mathbb{E}\left[    N_R (N_R-1) \right] \ & = \sum_{x\neq y}^{\qquad(R)} \mathbb{P}\left( \mathcal{T}_x \cap  \mathcal{E}_x \cap  \mathcal{N}_x \cap \mathcal{T}_y \cap  \mathcal{E}_y \cap  \mathcal{N}_y \right) 
\leq \sum_{x\neq y}^{\qquad(R)} \mathbb{P}\left( \mathcal{E}_x \cap  \mathcal{E}_y\right) \, ,
\end{align}
with the susums in $\sum^{(R)}$ are  over sites of $\G$ at distance $R$ from $0$.   

By \eqref{eq:defSigmardel} the event $\mathcal{E}_x$ corresponds to $\left\{ \, |G(x,x;E+i0) | \ge 1/t(0,x;E)\, \right \}$, and likewise for $y$.  The challenge is to bound the effects of correlations between such rare events.
Considering the  restriction of the resolvent kernel to the two dimensional space spanned by $\delta_y$ and $\delta_y$, we have the following estimate  (which forms a slight variant of \cite[Thm.~A.2]{AiWa_resdeloc}).

\begin{lemma}\label{lem:ffm2cor}
Let $ A $ be a self-adjoint operator in $  \ell^2(\mathbb{G})  $ and  
  $\rho(du\, dv) = \varrho_1(u) \, \varrho_2(v) du\, dv $ an absolutely continuous probability measure on $\R^2$ 
with bounded densities $ \varrho_{j} \in L^\infty(\R) $ ($j=1,2$). 
Then there is some $ C < \infty $ such that the Green function of
 $$ H_{u,v} := A + u \indfct_{\{y\}} + v  \indfct_{\{x\}} $$  
  satisfies for any 
 $z \in \C\backslash \R$ and any $ a,b > 0 $:
 \begin{multline}\label{eq:twoL1}
 \rho\left(\left\{ (u,v) \in \R^2 \ | \  |G_{u,v}(x,x;z)| > a^{-1}  \; \mbox{and} \;  |G_{u,v}(y,y;z)| >  b^{-1}  \right\}\right) \\
  \leq \  4 \| \varrho\|_\infty^2 \sqrt{ ab }   \, 
 \min\Big\{ 2 ( \sqrt{ab} +  \sqrt{|\tau(x,y;z)| |\tau(y,x;z)|} ) , \max\Big\{ \sqrt{\frac{a}{b}} , \sqrt{\frac{b}{a}} \Big\}\Big\}  \, . 
\end{multline}
with $ \tau(x,y;z) $  the tunneling amplitude associated with $ (\delta_x,\delta_y) $ at $ z $ and $ \| \varrho\|_\infty := \max\{ \| \varrho_1\|_\infty, \| \varrho_2\|_\infty \} $
\end{lemma}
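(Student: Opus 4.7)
The strategy is to apply the rank-two Schur complement formula \eqref{eq:rank2} to the operator $H_{u,v}$ at the two sites $\{x,y\}$. Identifying $u = V(y)$ and $v = V(x)$, and setting
\[
\tilde u := u - \sigma(y;z), \quad \tilde v := v - \sigma(x;z), \quad T := \tau(x,y;z)\,\tau(y,x;z), \quad D := \tilde u \tilde v - T
\]
(where $\sigma$ and $\tau$ depend only on $A$ and $z$), the formula yields $G_{u,v}(x,x;z) = \tilde u/D$ and $G_{u,v}(y,y;z) = \tilde v/D$. Equivalently, via the rank-one formula, $G_{u,v}(x,x;z) = (v - \Sigma_u(x;z))^{-1}$ and $G_{u,v}(y,y;z) = (u - \Sigma_v(y;z))^{-1}$, with $\Im \Sigma_u(x;z), \Im \Sigma_v(y;z) \ge 0$ for $\Im z > 0$. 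The event of interest then reads
\[
\mathcal{B} = \bigl\{(u,v)\in\R^2 :\ |D| < a\,|\tilde u| \ \text{and}\ |D| < b\,|\tilde v|\bigr\},
\]
and I would bound $\rho(\mathcal{B}) \le \|\varrho\|_\infty^2 |\mathcal{B}|$ by two complementary Lebesgue estimates, the final bound being their minimum.

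For the \emph{coarse bound} $4\|\varrho\|_\infty^2 \max\{a,b\}$, the key observation is that for each fixed $u$ the set $\{v \in \R : |v - \Sigma_u(x;z)| < a\}$ is an interval of length at most $2a$, and symmetrically for each fixed $v$ the companion condition confines $u$ to an interval of length at most $2b$. Integrating by Fubini and keeping whichever of the two estimates is stronger yields the desired bound in its $\sqrt{ab}\,\max\{\sqrt{a/b},\sqrt{b/a}\}$-form.

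For the \emph{refined bound} $8\|\varrho\|_\infty^2(ab + \sqrt{ab|T|})$, one exploits the coupling of the two conditions. Multiplying the two defining inequalities gives $|D|^2 < ab\,|\tilde u\tilde v|$, and combined with the reverse triangle inequality $|D| \ge \bigl||\tilde u\tilde v| - |T|\bigr|$ this confines $s := |\tilde u\tilde v|$ to an interval of length $O(ab + \sqrt{ab|T|})$ about $|T|$. Rewriting $|D| < a|\tilde u|$ as $|\tilde v - T/\tilde u| < a$, one sees that for each $\tilde u$ the admissible $v$ lies in an interval of length at most $2a$; a Fubini integration in the $(\tilde u,\tilde v)$-coordinates, with the range of $|\tilde u|$ controlled by the confinement of $s$, then produces the claimed estimate.

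The main obstacle is the refined bound: the two conditions are nonlinearly coupled through $T$ in a Möbius fashion, so $\mathcal{B}$ is not a rectangle and one must execute the geometric integration carefully enough that the estimate degrades gracefully as $T \to 0$ (recovering the independent-like $O(ab)$) while picking up only the correction $O(\sqrt{ab|T|})$ for nonzero $T$. Taking the minimum of the two bounds and absorbing numerical constants then yields the form stated in the lemma.
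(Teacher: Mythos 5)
Your setup is the same as the paper's: rank-two Schur complement, writing $G_{u,v}(x,x;z)=\tilde u/D$, $G_{u,v}(y,y;z)=\tilde v/D$ with $D=\tilde u\tilde v-T$, and then bounding the Lebesgue area of the admissible set in the $(\tilde u,\tilde v)$-plane. (The paper also rescales to $U=\sqrt{b/a}\,\tilde u$, $V=\sqrt{a/b}\,\tilde v$ so that the two conditions become the symmetric pair $|U-\gamma/V|\le\sqrt{ab}$, $|V-\gamma/U|\le\sqrt{ab}$ with $\gamma:=T$; this is cosmetic but tidies the bookkeeping.) Your coarse bound and your observation that for fixed $\tilde u$ the $\tilde v$-set is an interval of length $\le 2a$ are both correct and are the analogue of the paper's observation (ii).

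However, the refined bound has a genuine gap. You argue that multiplying the two conditions and applying the reverse triangle inequality confines $s:=|\tilde u\tilde v|$ to an interval of length $O(ab+\sqrt{ab|T|})$ about $|T|$, and you then assert that ``the range of $|\tilde u|$ is controlled by the confinement of $s$.'' It is not. The region $\{(\tilde u,\tilde v): s_-<|\tilde u\tilde v|<s_+\}$ is a thickened hyperbola of \emph{infinite} Lebesgue area: for every $|\tilde u|$, no matter how large, there is a nonempty $\tilde v$-slice of measure $\approx 2(s_+-s_-)/|\tilde u|$, and this integrates to a logarithmic divergence. Combining it with the $\le 2a$ thin-slice bound still yields a slice measure $\min\{2a,\,2(s_+-s_-)/|\tilde u|\}$ whose integral over $\tilde u\in\R$ diverges, so you have not produced a finite area estimate. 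What is actually needed, and what the paper supplies as its observation (i), is a bound on $\min\{|U|,|V|\}$ (equivalently, up to rescaling, on $\min\{\sqrt{b/a}\,|\tilde u|,\sqrt{a/b}\,|\tilde v|\}$): one shows directly from the two defining inequalities that $\min\{|U|,|V|\}\le\sqrt{|\gamma|}+\sqrt{ab}$. This confines the admissible set to the union of two strips $\{|U|\le\sqrt{|\gamma|}+\sqrt{ab}\}\cup\{|V|\le\sqrt{|\gamma|}+\sqrt{ab}\}$, and on each strip the thin-slice observation bounds the area by $4\sqrt{ab}(\sqrt{|\gamma|}+\sqrt{ab})$; the total is the claimed $8\sqrt{ab}(\sqrt{ab}+\sqrt{|\gamma|})$. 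Your confinement of the product $s$ is a true consequence, but it is strictly weaker than the min-bound and cannot replace it. To repair the proof, derive the min-bound (e.g. WLOG $|V|\le|U|$: if $|U|\ge\sqrt{|\gamma|}$ then $|V|\le|\gamma|/|U|+\sqrt{ab}\le\sqrt{|\gamma|}+\sqrt{ab}$, and if $|U|<\sqrt{|\gamma|}$ then $|V|\le|U|<\sqrt{|\gamma|}$ already) and then run the two-strip Fubini argument.
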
 
\begin{proof}
The rank-$2 $ Schur complement formula \eqref{eq:rank2}  reveals the dependence of the diagonal Green functions on $(u,v) $. 
Abbreviating
 \begin{align} 
U \ := \ \sqrt{\frac{b}{a} }\left( u - \sigma(x;z)\right) \, , \quad 
V \ := \  \sqrt{\frac{a}{b} }\left( v- \sigma(y;z)\right)  \, ,
\end{align}   
and $\gamma := \tau(x,y;z) \tau(y,x;z)$, 
The lower bounds on $|G_{u,v}(x,x;z)| $ and $ |G_{u,v}(y,y;z)|$  translate to: 
 \begin{align}
\left|U - \frac{\gamma}{V}\right| \ \le \ \sqrt{ab}\, , \quad 
\left|V - \frac{\gamma}{U}\right| \ & \le \ \sqrt{ab}   \, .    \label{v}  
\end{align}

\begin{figure}[h]  
\begin{center}
\includegraphics[width=.55\textwidth] {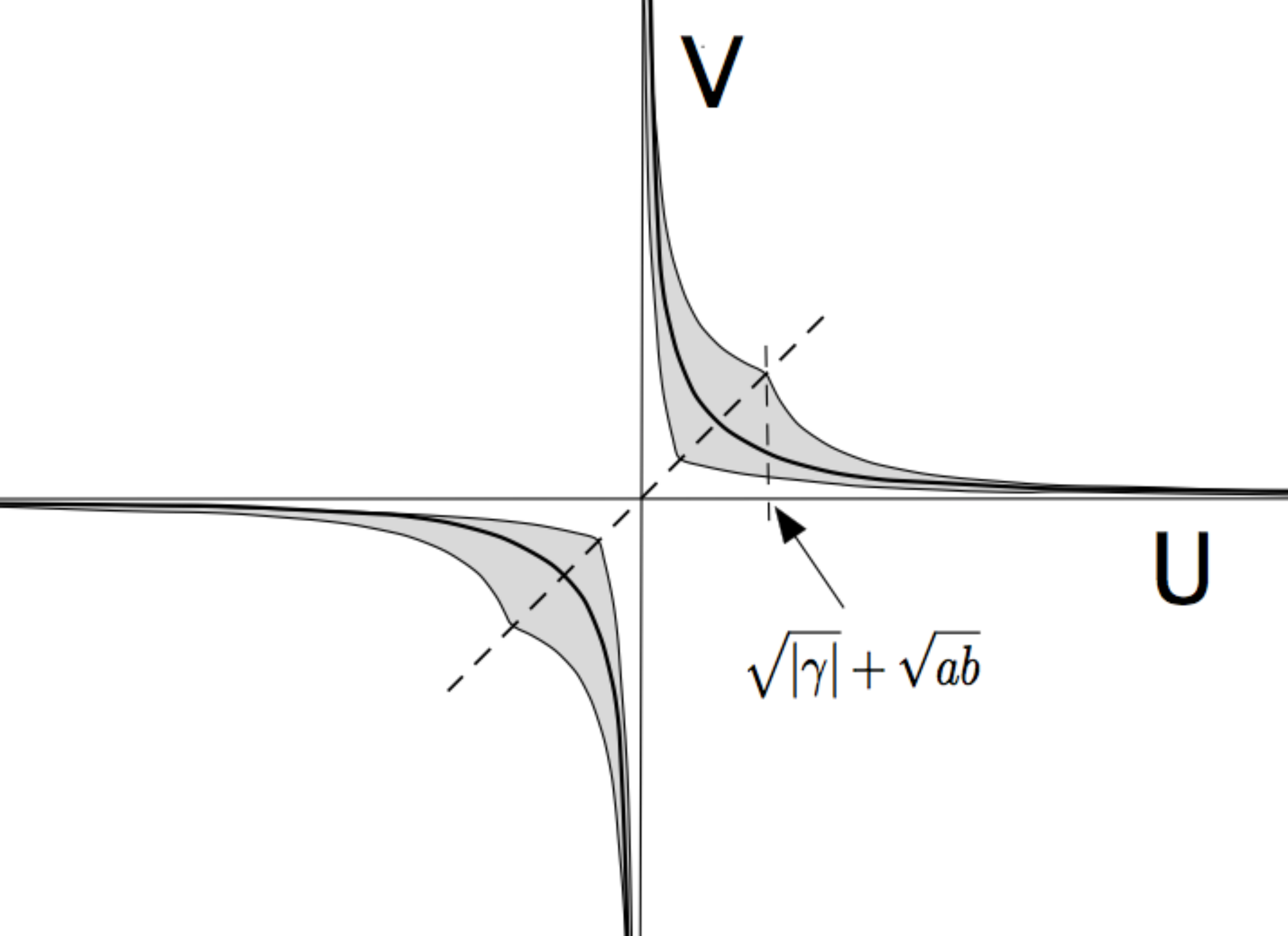} 
\caption {The solution set of \eqref{eq:twoL1} in the $(U,V)$ plane (for real $\gamma$ and $\sigma$).}    
 \label{fig_15_UV}
 \end{center}
\end{figure}

The claim can be proven through the following two observations about the set in the  $(U,V)$ plane over which the conditions \eqref{v}  are met  
(the set's shape  is indicated in Figure~\ref{fig_15_UV}).
   
\begin{enumerate}[i.] 
\item  For any solution:\\[-4ex] 
\be  \label{w-bound}
\min\{ |U|,|V|\} \ \le \ \sqrt{|\gamma|} + \sqrt{ab} \, . 
\ee 
\item For specified  $v$, 
the set of $U$  for which \eqref{v} holds is an interval of length 
at most $2  \sqrt{ab}$, 
and a similar statement holds for  $V$ and $U$ interchanged.   
\end{enumerate} 
The area bound which these yield upon integration translates directly into \eqref{eq:twoL1}.  

\end{proof}

\medskip 

We now return to the main result, which as was explained above hinges on Lemma~\ref{lem:condrargresd}.

\begin{proof}[Proof of Lemma~\ref{lem:condrargresd}] 
 Lemma~\ref{lem:ffm2cor} yields:
\be \label{exp_val}
\mathbb{P}\left( \mathcal{E}_x \cap  \mathcal{E}_y\right) \leq  8 \left(  t(0,x)  t(0,y)  +  \sqrt{ t(0,x)  t(0,y)} \; \mathbb{E}\left[\min\left\{ \sqrt{|\tau(x,y;E)\tau(y,x;E)|} , 1 \right\} \right] \right)  \, . 
\ee

The assumption \eqref{eq:neq0} allows to conclude that $\Im \langle \psi, (H-E-i0)^{-1} \psi \rangle =0$ for all $\psi \in \ell^2(\G)$ and thus:  $ |\tau(x,y;E)| = |\tau(y,x;E)| $.  
Thus the expectation value in \eqref{exp_val} coincides with $ T(x,y) $ of \eqref{t_T}.  Applying the Cauchy-Schwarz inequality we get  
\be
\mathbb{E}\left[    N_R (N_R-1) \right]  \leq 8 \, \| \varrho\|_\infty^2\,  \mathbb{E}\left[    N_R\right]^2  + 8 \, \| \varrho\|_\infty^2 \sum_{x: \floor{ d(x,0)}= R} t(0,x) \sum_{y :\floor{ d(y,0) } = R}  T(x,y)\, . 
\ee
Assumption {\bf A2}  then allows to conclude the second-moment bound:
\be \label{2nd_moment} 
\limsup_{L\to \infty} \frac{\mathbb{E}\left[    N_R (N_R-1) \right] }{\mathbb{E}\left[    N_R\right]^2} \leq 8  \, \| \varrho\|_\infty^2 \, (1+C_T)\, . 
\ee

Through the Paley-Zygmund criterion~\eqref{eq:PZ}, 
the pair of moment bounds~\eqref{eq:avNdiverges} and \eqref{2nd_moment} yield  Lemma~\ref{lem:condrargresd}.
\end{proof}

As was noted below the statement of the just proven Lemma, it readily implies Theorem~\ref{thm:resdeloc}.

\section{Discussion: exploring  the argument's limits }

\noindent 1.~For a sense of how far can the above analysis can be extended, let us consider the following, admittedly ``optimistic'', picture of the possible  reach of the resonant delocalization argument.   \\  

For an infinite transitive metric graph, let: 
\be  \notag
\chi(R)  \ := \  \log \left[ \rm{card} \{ x\in \G :\, \dist(0,x) \in [ R, R+1] \} \right] \, 
\ee
so that by definition the number of sites at distance $R$ from $0\in \G$ grows as $e^{\chi (R)}$.    E.g., on a regular tree of degree $K+1$, $\chi(R)$ grows as $ \chi(R) \approx  R   \log K $, 
  while for the $N$-hypercube the analogous function  grows much faster, as long as $ R \le N/2$. \\ 
  
Consider now the case where the tunneling amplitude is exponentially small in $\chi(\dist(0,x))$:   
\be \notag 
\tau(0,x) \approx  \exp(- [L +o(1)] \chi(\dist(0,x)) )  
\ee
 but the effective decay rate assumes, in addition to its typical value $L_0$ also a range of values $L <  L_0$ , though at  only at a random and exponentially small fraction of sites on the $R$-sphere.   To be more specific, assume it exhibits  large deviation behavior in the sense  that   for any fixed $\delta>0$:
  \be 
 \Pr \left\{  \left| \frac{\log \tau(0,x)}{  \chi(\dist(0,x))} +L \right | < \delta  \right \} 
 \approx e^{-[\gamma(L) + o(1)] \, \chi(\dist(0,x))}
 \ee  
  with  a good rate function  $\gamma(L)$.    
(The standard vocabulary  of the large deviation theory, and its relevant basic results may be found in e.g., \cite{LargeDev}.) \\  

To estimate the effect of possible resonances between $0$ and points on the subset of the $R$-sphere to which the tunneling amplitude $\tau(o,x)$ larger than normal,  let us consider the three events  described by \eqref{TEN} with the cutoff function modified to: 
\be \label{modified_t}
t(0,x) \ = \ e^{-L \, \chi(\dist(0,x)) }  \, ,   
\ee 
at a fixed $L<L_0$.   Compared with the analysis in Section 3, $\mathcal T_x$ is now made into a  rare event, of probability $\approx e^{-\gamma(L)}$.  However, for the sites where it is realized, the probability of $\mathcal E_x$ (which may still be expected to be of the order $e^{-L \, \chi(\dist(0,x))}$)  while still small, will be much larger than the previous $e^{-L_0 \, \chi(\dist(0,x))}$.    Assuming this part of the argument could be carried through,   instead of \eqref{N_mean} one would get for 
the first-moment a lower bound of the form: 
\be 
\Ev{N_R} \ \geq \  \rm{C}_\delta   e^{-\delta R} \, \,  e^{ \chi(R) }\, e^{- [\gamma(L) + L]\,  \chi(R)} \, 
\ee 
where the two exponential factors are the surface area, and the fraction of sites at which the modified events $\mathcal T_x \cap \mathcal E_x$ occur.  \\ 

Thus, assuming the validity of the above sketched large deviation structure, and considerations,  a sufficient  condition for passing the first moment  test \eqref{eq:avNdiverges} may well be:
\be  
 - \varphi (1) := \inf_{L\geq 0} [\gamma(L)   +  L ]  \ <   \ 1  \, ,   
\ee 
with $\varphi(\cdot)$  the Legendre transform of the function $\gamma$: (interpreted as $\gamma(L)=+\infty$ for $L<0$):
\be  \notag
- \varphi(s)  \ : = \  \inf_{L\geq 0} [\gamma(L)   + s  L ] \,. 
\ee 
(The regretable negative sign is to keep notational consistency with \cite{AiWa_resdeloc}.)  \\ 

Under the above assumptions (in particular applicability of the large deviation picture), for $s<1$:   
\begin{eqnarray}  
\varphi(s) & = &  \lim_{\dist(0,x) \to \infty} \frac{1}{\chi(\dist(0,x))} \log \Ev{|\tau(0,x;E)|^s}  \notag \\[2ex] 
& =&   \  \lim_{\dist(0,x) \to \infty} \frac{1}{\chi(\dist(0,x))} \log \Ev{\min[|\tau(0,x;E)|,1]^s}    \, .  \notag 
\end{eqnarray}  
The restriction to $s<1$ is to avoid the spurious effects of the trivial divergence of the first moment.  The truncation makes no difference for $s<1$,  since in that case the  moments are  bounded uniformly in $x$, however it allows to extend the definition to  $s=1$. \\ 

Thus, the first-moment test  for \emph{delocalization} could conceivably be proven, under the above assumptions, 
for the regime in which
\be  \label{eq:ell1_deloc}
\quad   \sum_x  e^{\varepsilon\,  d(o,x)} \, 
\Ev{ \,  \min \{ \, |\tau(0,x;E)| \, , \, 1\}  \, } \  = \ \infty \, 
  \,.  
\ee 
for  some $\varepsilon >0 $.   

However, to turn the above discussion into a proof one would need to establish also   the second-moment upper bound \eqref{eq:PZ}.  That would be more involved than what was faced in Section 3, since it now requires to bound correlations between not just pairs of essentially local events, at $(x,y)$ as above, but also between large deviations of the corresponding path-related tunneling amplitudes.  For tree graphs such a program was carried out in \cite{AiWa_resdeloc}.\\  

 It is of interest to note that   \eqref{eq:ell1_deloc} 
has the appearance of a  \emph{complementary} condition (except for transitional points)  to the  fractional moment \emph{localization criterion}, by which pure point spectrum in a Borel set   of positive measure $I\subset \R$  can be concluded from the property:
\be  \label{eq:ell1_loc}
 \mbox{for some $s<1 $, and all $E\in I$}: \quad   \sum_x  
\Ev{ \,  |\tau(0,x;E)   |^{s}  \, }\  < \ \infty \, .
\ee 
 (Missing is a proof that the transition between convergence and divergence occurs along a reasonably regular boundary in the $(E,\lambda)$ plane).  \\ 
 
\noindent 
2.~Although the notion of continuous spectrum applies only to spectra of operators on infinite graphs, resonant delocalization can play an essential role also for  finite graphs.  In particular, that was shown to be the case for the complete graph on $1\ll N<\infty$ points \cite{ASW_14}. 
It would be of interest to see the present techniques explored further in the finite graph context.  \\

%%      ---------------------------------------------------------------------
%%      ------------------------- APPENDIX (OPTIONAL) -----------------------
%%      ---------------------------------------------------------------------
        
%%      If you have one appendix, uncomment the line \appendix and add
%%      a \section{ *** APPENDIX TITLE ***}. If you have more than
%%      one, uncomment the line \appendices and add a \section{ ***
%%      APPENDIX TITLE ***} command for each appendix title.

\appendix

\section*{APPENDIX}

\section {Stochastic delta function principle} 

Following is the delta function principle which is invoked  in the proof of Lemma~\ref{lem:lowerbd}.  
\begin{lemma}\label{lem:rho_f} 
Let  $\Xi_n = X_n+i Y_n$ a sequence of random variables with values in $\C^+$  which converge in distribution to a real-valued random variable $ X $ and let $V$ be an independent  random variable of an absolutely continuous distribution which for some $\delta >0$ and $c(\delta) <\infty$ satisfies the pointwise bound:
\be\label{eq:assdensity}
\rho \  \leq \  c(\delta) \, \inf_{\varepsilon \in (0,\delta)} (1_\varepsilon * \rho) \, .
\ee
Then:
\be \label{goal}
\lim_{n\to \infty} \,   \mathbb{E}\left[\Im \frac{\pi^{-1}}{V-\Xi_n }    \right]  \ \leq   \  c(\delta)  \inf_{\varepsilon \in [0,\delta]} \frac{1}{2\varepsilon} \,   \mathbb{P}\left(  | V- X|\leq \varepsilon\right) \, .
 \ee 
\end{lemma}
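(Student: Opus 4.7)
The plan is to reformulate the expectation as a Poisson-kernel convolution, use the density assumption \eqref{eq:assdensity} to replace $\rho$ by a regularized version, and then invoke weak convergence. I would first observe that $\Im \frac{\pi^{-1}}{v - (x+iy)} = P_y(v-x)$, where $P_y(u) := y/(\pi(u^2+y^2))$ is the standard Poisson kernel. Since $V$ is independent of $\Xi_n = X_n + iY_n$, Fubini gives
\be\notag
\mathbb{E}\left[\Im \frac{\pi^{-1}}{V - \Xi_n}\right] \ = \ \mathbb{E}[\,(P_{Y_n} * \rho)(X_n)\,].
\ee
Convergence of $\Xi_n$ in distribution to the real-valued $X$ forces the joint convergence $(X_n, Y_n) \to (X, 0)$ in distribution; in particular $Y_n \to 0$ in probability.

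Next I would use the regularity bound to replace the potentially irregular $\rho$ by a Lipschitz smoothing. Fix $\varepsilon \in (0, \delta)$ and set $g_\varepsilon := 1_\varepsilon * \rho$. The hypothesis \eqref{eq:assdensity} gives the pointwise bound $\rho \le c(\delta)\, g_\varepsilon$, and convolving against the nonnegative kernel $P_{Y_n}$ preserves it, so
\be\notag
(P_{Y_n} * \rho)(X_n) \ \le \ c(\delta)\, (P_{Y_n} * g_\varepsilon)(X_n).
\ee
The point of $g_\varepsilon$ is that it is bounded by $\|\rho\|_\infty$ and Lipschitz continuous with constant $\|\rho\|_\infty/\varepsilon$, inherited directly from $\rho \in L^1 \cap L^\infty$.

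The remaining step is to pass to the limit. Define $F_\varepsilon \colon \R \times [0, \infty) \to \R$ by $F_\varepsilon(x, y) := (P_y * g_\varepsilon)(x)$ for $y > 0$ and $F_\varepsilon(x, 0) := g_\varepsilon(x)$. A standard approximate-identity estimate, splitting the convolution integral over $\{|u| < \eta\}$ and its complement and using the Lipschitz bound on $g_\varepsilon$, shows that $F_\varepsilon$ is bounded and jointly continuous. Extending it to a bounded continuous function on $\R^2$ and applying the portmanteau theorem to $(X_n, Y_n) \to (X, 0)$ then yields
\be\notag
\lim_{n\to\infty} \mathbb{E}[(P_{Y_n} * g_\varepsilon)(X_n)] \ = \ \mathbb{E}[g_\varepsilon(X)],
\ee
while a Fubini computation based on the independence of $V$ and $X$ identifies $\mathbb{E}[g_\varepsilon(X)] = (2\varepsilon)^{-1}\mathbb{P}(|V - X| \le \varepsilon)$. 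Combining these gives $\limsup_n \mathbb{E}[\Im \pi^{-1}/(V-\Xi_n)] \le c(\delta)(2\varepsilon)^{-1}\mathbb{P}(|V-X| \le \varepsilon)$ for every $\varepsilon \in (0, \delta)$; taking the infimum yields \eqref{goal}.

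The one technical point needing care is the joint continuity of $F_\varepsilon$ at the boundary $y = 0$, which is why the preliminary Lipschitz smoothing step cannot be bypassed. The density hypothesis \eqref{eq:assdensity} is precisely what makes this regularization available at the cost of only the multiplicative constant $c(\delta)$.
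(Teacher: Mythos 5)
Your proof is correct and follows the same overall architecture as the paper's: rewrite the expectation as a Poisson integral against the law $\mu_n$ of $\Xi_n$ on $\R\times[0,\infty)$, dominate $\rho$ by $c(\delta)\,g_\varepsilon$ with $g_\varepsilon := 1_\varepsilon*\rho$, pass to the limit via weak convergence of $\mu_n$ to $\mu\otimes\delta_0$ using boundedness and joint continuity of the auxiliary function $F_\varepsilon(x,y)=(P_y*g_\varepsilon)(x)$ (with $F_\varepsilon(x,0)=g_\varepsilon(x)$), and identify $\mathbb{E}[g_\varepsilon(X)]=(2\varepsilon)^{-1}\mathbb{P}(|V-X|\le\varepsilon)$. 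The one place you genuinely diverge is the proof of joint continuity of $F_\varepsilon$ up to the boundary $y=0$: the paper does this via the Fourier representation $F_\varepsilon(x,y)=\int\hat\rho(k)\,\mathrm{sinc}(k\varepsilon)\,e^{ixk-y|k|}\,\tfrac{dk}{2\pi}$, noting $\hat\rho\in L^2$ (since $\rho\in L^1\cap L^\infty$) makes the integrand absolutely integrable uniformly in $y\ge0$ and invoking dominated convergence, whereas you use the real-variable observation that $g_\varepsilon$ is bounded and Lipschitz (constant $\|\rho\|_\infty/\varepsilon$) and run the standard approximate-identity split of $\int P_y$ near/away from the origin. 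Your version is more elementary (no Fourier analysis needed) and makes the role of the regularization more transparent; the paper's Fourier computation is shorter once written down and avoids having to extend $F_\varepsilon$ off the closed half-plane. Both are sound, and the rest of the argument coincides with the paper's.
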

\begin{proof}
We will make use of the fact that for any $ \rho \in L^1 \cap L^\infty $ the family of functions $  f_\varepsilon : \R \times [0,\infty) \to \R $ defined though
\be   \notag 
 f_\varepsilon(\alpha,\beta)  :=   \begin{cases} \int (1_\varepsilon * \rho)(v)  \frac{\beta}{(v-\alpha)^2 +\beta^2} \, \frac{dv}{\pi} \, , & \quad \beta > 0 \, , \\
\;  (1_\varepsilon * \rho)(\alpha) \, , & \quad \beta = 0 \, .
 \end{cases}
\ee
is bounded and continuous for any $ \varepsilon > 0 $. Its boundedness is evident. For a proof of continuity, we use its Fourier representation 
\be\label{eq:Fourier}
 f_\varepsilon(\alpha,\beta) =  \int \hat \rho(k) \,  {\rm sinc}(k\varepsilon) \, e^{i\alpha k - \beta |k|}  \, \frac{dk}{2\pi} \, , 
\ee
where ${\rm sinc}(\xi) := \frac{\sin(\xi)}{\xi} $ if $ \xi \neq 0 $ and $ {\rm sinc}(0) :=  1 $. The function $  \hat \rho(k) := \int \rho(v) e^{ikv} \frac{dv}{2\pi}  $  is  square-integrable  since $ \rho \in L^1 \cap L^\infty $ (the latter by \eqref{eq:assdensity}). This renders the integrand in~\eqref{eq:Fourier} absolutely integrable even in case $ \beta = 0 $. The claimed continuity  follows then from~\eqref{eq:Fourier}, with the help of the dominated convergence theorem. 

By assumption, the sequence of probability measures on $ \R \times [0,\infty) $ 
\be \notag 
 \mu_{n}(d\alpha d\beta) = \mathbb{P}\left( \Xi_n \in d\alpha + i d\beta \right) 
 \ee  converges weakly to   $ \mu(d\alpha ) \, \delta_{0}(d\beta) $, with $ \mu $ the probability distribution of $ X $ and $ \delta_0 $ Dirac's measure at zero. Using the estimate
\begin{align} \notag 
\mathbb{E}\left[\Im \frac{\pi^{-1}}{V-\Xi_n }    \right]  = \int \int \rho(v)  \frac{\pi^{-1} \beta}{(v-\alpha)^2 +\beta^2} dv \, \mu_{n}(d\alpha d\beta) \leq 
c(\delta)  \int  f_\varepsilon(\alpha,\beta) \, \mu_{\eta}(d\alpha d\beta) \, ,
\end{align}
the claim follows, since by the convergence of $\mu_n$ and the continuity of $ f_\varepsilon $:
 \be  \notag 
 \lim_{n\to \infty}  \int  f_\varepsilon(\alpha,\beta) \, \mu_{n}(d\alpha d\beta) = \int f_\varepsilon(\alpha,0) \mu(d\alpha ) = \frac{1}{2\varepsilon} \mathbb{P}( | V- X | \leq \varepsilon) \,   
 \ee
 for any $\varepsilon \in (0, \delta)$.  
\end{proof}

\section{Simplicity of the pure point spectrum}  

The  rank-one  (and more generally finite rank) perturbation method which underlines the above criteria for continuous spectrum   allows also to shed light  on properties of the pure point spectrum and in particular its almost sure simplicity.   
Results in this vane were initially presented by 
B. Simon, and then  V. Jak\v{s}i\'c and Y. Last, who proved simplicity of the point spectrum~\cite{S} and more generally singular spectrum~\cite{JLdm}  for operators with random potential of absolutely continuous conditional distribution.   
The following streamlined statement shows that  
the absolute continuity of the distribution of $V(x)$ (which enables one to apply the spectral averaging principle)  
is an unnecessarily strong condition  for the simplicity of the point spectrum.

\begin{theorem}\label{thm:simplicity}
Let $ H(\omega) = A + V(\omega) $ be a random operator  on $ \ell^2(\G) $ with $ A $ bounded and self-adjoint  and $ V(\omega) $ a random potential such that for any  $ x   \in \G $   the conditional distribution of $V(x)$ conditioned on $ V_{\neq x} := \{ V(u) \}_{u \neq x} $ is continuous. 
Then  the pure point spectrum of $H(\omega) $ is almost surely simple.
\end{theorem}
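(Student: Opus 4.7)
The plan is to exploit the rank-one perturbation framework underlying Proposition~\ref{prop:Aronj}, combined with the measure-theoretic fact that a continuous conditional distribution assigns zero mass to countable sets. First I would reduce the claim to a per-site statement. Assume for contradiction that with positive probability $H(\omega)$ admits an eigenvalue $E$ of multiplicity at least two. Any eigenvector is nonzero at some vertex, and $\G$ is countable; hence for some fixed $x \in \G$ there is a positive-probability event on which an eigenvector $\varphi_1$ of $H(\omega)$ at $E$ satisfies $\varphi_1(x)\neq 0$. Let $\mathcal H_x$ denote the closed cyclic subspace of $\delta_x$ under $H(\omega)$. Since $\mathcal H_x$ is $H(\omega)$-invariant and carries a simple spectrum, the intersection $\ker(H(\omega)-E)\cap \mathcal H_x$ is at most one-dimensional, and is nonzero precisely when $\mu_{\delta_x}(\{E\};\omega) > 0$. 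Multiplicity $\geq 2$ then forces $\ker(H(\omega)-E)\cap \mathcal H_x^{\perp}\neq\{0\}$, and any such eigenvector $\tilde\varphi$ satisfies $\langle\delta_x,\tilde\varphi\rangle = \tilde\varphi(x) = 0$. It therefore suffices to show that for each fixed $x\in\G$, almost surely there is no $E$ for which both $\mu_{\delta_x}(\{E\};\omega)>0$ and $\ker(H(\omega)-E)$ contains a nonzero vector vanishing at $x$.

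Next, fix $x$ and condition on $V_{\neq x}$. Write $H^{(v)} := A + V_{\neq x} + vP_{\delta_x}$ so that $H(\omega) = H^{(V(x))}$. Any eigenvector $\tilde\varphi$ of $H(\omega)$ at $E$ with $\tilde\varphi(x)=0$ satisfies $vP_{\delta_x}\tilde\varphi = v\tilde\varphi(x)\delta_x = 0$, and is therefore an eigenvector of $H^{(v)}$ at the same $E$ for every $v\in\R$. The set of such \emph{persistent} eigenvalues,
\[
\mathcal P(x;V_{\neq x}) \ := \ \{E\in\R : \exists\, \tilde\varphi\neq 0 \text{ with } H^{(0)}\tilde\varphi = E\tilde\varphi \text{ and } \tilde\varphi(x)=0\},
\]
depends only on $V_{\neq x}$ and is countable, being contained in the point spectrum of the self-adjoint operator $H^{(0)}$.

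Now I would apply Proposition~\ref{prop:Aronj} with $H_0 := H^{(0)}$ and $\psi := \delta_x$. It tells us that $\mu_{v,\delta_x}(\{E\})>0$ is equivalent to finiteness of $\gamma_{0,\delta_x}(E)$ together with the Aronszajn relation $\langle \delta_x, (H^{(0)}-E-i0)^{-1}\delta_x\rangle = -v^{-1}$, so for each $E$ there is at most one value $v^{\ast}(E;V_{\neq x})\in\R$ for which $E$ is an atom of $\mu_{v,\delta_x}$. In particular,
\[
\mathcal V(x;V_{\neq x}) \ := \ \{v\in\R : \exists\, E\in\mathcal P(x;V_{\neq x}) \text{ with } \mu_{v,\delta_x}(\{E\})>0\}
\]
is an at most countable subset of $\R$. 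By the assumed continuity of the conditional distribution of $V(x)$ given $V_{\neq x}$, we have $\mathbb P(V(x)\in\mathcal V(x;V_{\neq x})\mid V_{\neq x}) = 0$ almost surely; integrating and then taking a union bound over the countable graph $\G$ completes the argument via the reduction of the first paragraph.

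The most delicate points I expect are (i) the orthogonal decomposition $\ker(H(\omega)-E) = \bigl(\ker(H(\omega)-E)\cap\mathcal H_x\bigr)\oplus\bigl(\ker(H(\omega)-E)\cap\mathcal H_x^{\perp}\bigr)$, which hinges on the $H$-invariance of $\mathcal H_x$ and the identification of eigenvectors vanishing at $x$ with vectors in $\mathcal H_x^{\perp}$, and (ii) making explicit use of Proposition~\ref{prop:Aronj} to get uniqueness of the value of $v$ producing $E$ as a spectral atom of $\mu_{v,\delta_x}$. Everything beyond these two inputs is measure-theoretic bookkeeping leveraging the countability of $\mathcal P(x;V_{\neq x})$ and of $\G$.
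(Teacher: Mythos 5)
Your proof is correct and proceeds along essentially the same lines as the paper's: you reduce to a per-vertex event by decomposing along the cyclic subspace of $\delta_x$, observe that eigenvectors vanishing at $x$ (equivalently, lying in $\mathcal H_x^\perp$) persist under variation of $V(x)$ and hence give a countable eigenvalue set depending only on $V_{\neq x}$, and then use the rank-one analysis of Proposition~\ref{prop:Aronj} plus continuity of the conditional law of $V(x)$ to conclude the bad event is null. The only presentational difference is that the paper packages the final step as a separate spectral-averaging lemma (Lemma~\ref{lem:specnullav}, $\int \mu_{v,\psi}(S)\,\rho(dv)=0$ for countable $S$ and continuous $\rho$), whereas you inline it via the exceptional value set $\mathcal V(x;V_{\neq x})$.
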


Our proof  proceeds through Proposition~\ref{prop:Aronj}, of Simon-Wolff \cite{SimWol},      
extended by the observation that under its condition   the vector 
$  (H_0 - E -i 0 ) ^{-1} \, \psi  $  provides  the unique (up to a multiplicative constant)  proper eigenfunction of $H_v= H_0  + v \, P_\psi$ within its cyclic subspace  
\be \notag 
\mathcal H_\psi := \overline{\rm{span} }\{ (H_v -\zeta)^{-1} \psi \, :\, z\in \C\backslash \R \} \subset  \ell^2(\G) \, .
\ee

\begin{lemma}\label{lem:specnullav}
Let $\G$ be a countable set,  $H_0$ a bounded self-adjoint operator in $\ell^2(\G)$, and $H_v$ the one-parameter family of operators defined by \eqref{eq:H_v} with $\psi\in \ell^2(\G)$.    
 Then for any countable subset $ S \subset \R $ and any probability measure $ \rho $ which is continuous 
 \begin{equation}\label{eq:specnullav}
 \int \mu_{v,\psi}(S) \, \rho(dv) = 0 \, .
 \end{equation}
\end{lemma}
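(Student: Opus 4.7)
The plan is to read Proposition~\ref{prop:Aronj} in the converse direction: for each fixed $E \in \R$, it pins down at most one value of $v \neq 0$ at which $E$ can be a proper eigenvalue of $H_v$, namely the unique solution of $\langle \psi, (H_0 - E - i0)^{-1}\psi\rangle = -v^{-1}$ (when this boundary value exists, is real, and when $\gamma_{0,\psi}(E) < \infty$). Combined with the assumption that $\rho$ has no atoms, this gives the result almost immediately.

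First, I would fix $E \in \R$ and define
\[
\mathcal V(E) := \{ v \in \R\setminus\{0\} : \mu_{v,\psi}(\{E\}) > 0 \}.
\]
By Proposition~\ref{prop:Aronj}, membership in $\mathcal V(E)$ forces $v = -1/\langle \psi,(H_0 - E - i0)^{-1}\psi\rangle$ with the right-hand side a well-defined finite real number. Hence $\mathcal V(E)$ contains at most one point. Since $\rho$ is continuous, $\rho(\mathcal V(E)) = 0$, and since $\rho(\{0\}) = 0$ as well,
\[
\int \mu_{v,\psi}(\{E\}) \, \rho(dv) = 0
\]
for every $E \in \R$.

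Next, for a countable set $S = \{E_1, E_2, \ldots\}$, I would use countable additivity of $\mu_{v,\psi}$ on the atomic part:
\[
\mu_{v,\psi}(S) = \sum_{k} \mu_{v,\psi}(\{E_k\}).
\]
Since $\mu_{v,\psi}$ is a probability measure, each term lies in $[0,1]$ and the partial sums are bounded by $1$, uniformly in $v$. Tonelli (or bounded convergence) then permits swapping the sum and the $\rho$-integral:
\[
\int \mu_{v,\psi}(S) \, \rho(dv) = \sum_k \int \mu_{v,\psi}(\{E_k\}) \, \rho(dv) = 0,
\]
which is the desired identity \eqref{eq:specnullav}.

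There is no serious obstacle; the whole argument is a direct Fubini-style consequence of the uniqueness observation extracted from Proposition~\ref{prop:Aronj}. The only subtle point worth flagging in the write-up is that one genuinely uses both halves of condition (ii) in that proposition: the finiteness of $\gamma_{0,\psi}(E)$ ensures the boundary value of the resolvent diagonal element exists, while the equation $\langle \psi,(H_0 - E - i0)^{-1}\psi\rangle = -v^{-1}$ is what makes $v$ a function of $E$ rather than a free parameter. Once this is spelled out, continuity (non-atomicity) of $\rho$ does all the remaining work, and the hypothesis that $\rho$ is merely continuous, rather than absolutely continuous, is exactly what one needs.
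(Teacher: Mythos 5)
Your proof is correct and follows essentially the same route as the paper's: reduce to singletons $S=\{E\}$ by countable additivity, then use Proposition~\ref{prop:Aronj} to observe that for each fixed $E$ the integrand $\mu_{v,\psi}(\{E\})$ is supported on at most two values of $v$ (the point $v=0$ and the single solution of $\langle\psi,(H_0-E-i0)^{-1}\psi\rangle = -v^{-1}$), neither of which is charged by the continuous measure $\rho$. The only cosmetic difference is that you spell out the Tonelli step that the paper compresses into the phrase ``by the countable additivity of the measure $\int \mu_{v,\psi}(\cdot)\,\rho(dv)$.''
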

\begin{proof}
By the countable additivity of the measure $ \int \mu_{v,\psi}(\cdot) \, \rho(dv)$ it suffices to prove \eqref{eq:specnullav}  for one-point sets $ S = \{ E \} $, at  arbitrary  $ E \in \R $.   
The contribution to the integral from the one point set  $v=0$ vanishes since $ \rho(\{0\}) = 0 $.    For $v\neq0$,  by Proposition~\ref{prop:Aronj}  the integrand does not vanish only if 
$v = - \langle \psi , \, (H_0 - E -i 0 ) ^{-1} \, \psi \rangle^{-1} $. 
However, this point
is also not charged, since $\rho$ is a continuous measure.  Hence the integral vanishes. 
\end{proof}

\begin{proof}[Proof of Theorem~\ref{thm:simplicity}] 
For  $x\in \G$ let 
\begin{multline}  
\Omega_x := \\ 
\left\{ \omega \in \Omega \, | \,  \, \mbox{for some $ E \in \R $:  
$ {\rm dim} \ {\rm range}\,  P_{\{E\}}(H(\omega))  \geq 2 $ and 
$P_{\{ E\}}(H(\omega)) \delta_x \neq 0$ } \right\}
\end{multline} 
Our goal is to prove that this set is of vanishing probability.  
To highlight the dependence of the random operator 
$H=H(\omega)$ on $V(x)$ we write it in the form  
\begin{equation} \label{42}
H =: H_0 + V(x) \, \indfct_{\{x\}} \, ,  
\end{equation}
where $ H_0 $ is independent of $ V(x) $ and defined through the above equation. 
The full Hilbert space 
may be presented as the direct sum of the cyclic subspace 
$ \mathcal{H}_{H,x} = \mathcal{H}_{H_0,x} $ and its orthogonal complement:
\begin{equation}  \notag 
\ell^2(\G) = \mathcal{H}_{H_0,x} \oplus \mathcal{H}_{H_0,x}^\perp \, . 
\end{equation}
If the vector $ \delta_x $ is cyclic for $ H $ then $ \mathcal{H}_{H_0,x}^\perp $ consists of just the zero vector and the pure point spectrum of $ H $ is simple. In the following we therefore concentrate on the case that 
$ \mathcal{H}_{H_0,x}^\perp  \neq \{0\} $. 

The operator $ H $ leaves both $\mathcal{H}_{H_0,x}$ and its  orthogonal complement invariant. Its point spectrum is therefore the union of point spectra the operator has in the two subspaces. 
Two notable features of this decomposition are: i) the spectrum of $H$ in $\mathcal{H}_{H_0,x}$ is non-degenerate, ii) the spectrum in  $\mathcal{H}_{H_0,x}^\perp$ does not vary with $V(x)$. 
  
Let $S_x$ denote the set of eigenvalues of the restriction of $ H $ to $ \mathcal{H}_{H_0,x} ^\perp$.  Its independence of  $V(x)$  follows from the observation  that the eigenfunctions    $ \psi_E $ in $  \mathcal{H}_{H_0,x}^\perp  $ have to vanish at $x$,  since 
for any $ \varphi = f(H) \delta_x \in  \mathcal{H}_{H_0,x} $ 
\begin{equation}
0 = \langle \varphi \, , \, \psi_E \rangle = \langle \delta_x \, , \, f(H) \psi_E \rangle = f(E) \, \psi_E(x) \, . 
\end{equation}
This implies that $ \psi_E $ is also an eigenfunction of $ H_0 + \widehat V(x) \, \indfct_{\{x\}} $ for any other $ \widehat V(x) \in \R $ with the same eigenvalue $ E $. 

Since the set $ S_x $ is independent of $ V(x) $, by Lemma~\ref{lem:specnullav}  the conditional expectation of $\mu_{\delta_x}(S_x) $,  conditioned on $ V_{\neq x } $, is zero for each value of 
all the other parameters.  (In this argument $\mu_{\delta_x} $ is the spectral measure associated with $H$ and the vector  $ \delta_x\in \ell^2(\G) $, and use is made of the continuity of the conditional distribution of $V(x)$).   
Hence 
\begin{equation} 
\notag 
\mathbb{E}\left[   \mu_{\delta_x}(S_x) \right] = \mathbb{E}\left[\mathbb{E}\left[  \mu_{\delta_x}(S_x) \, \big| \,   V_{\neq x } \right] \right] = 0 \, . 
\end{equation}
This means that  the point spectrum of $H$ in $\mathcal{H}_{H_0,x}$,  which  supports $ \mu_{\delta_x} $, almost surely does not intersect $ S_x $, or 
\be 
\text{Prob} (\Omega_{x}) \ = \ 0 \,. 
\ee 
Since countable unions of null sets carry zero probability, also  $\Omega_0 := \bigcup_{x \in \G  }  \Omega_{x}$  has this property. 

On the complement of $\Omega_0$  for each $ E \in \R $ 
the vectors  $P_{\{E\}}(H(\omega))  \delta_x$ and $P_{\{E\}}(H(\omega))  \delta_y$ at different $x,y\in \G$ are collinear, when non-zero.   Since their collection spans the full range of 
$P_{\{E\}}(H(\omega))  $ in $\ell^2(\G)$, the point spectrum is simple for any $\omega$ in the complement of the null set $\Omega_0$. 
\end{proof}

In the simple case that $H_0\equiv 0$ and the potential is given by iid random variables, the continuity  of the distribution of $V$ is trivially both sufficient and necessary for the almost sure simplicity of the spectrum.  Hence the sufficient condition of 
Theorem~\ref{thm:simplicity}  cannot be weakened for a statement which is valid regardless of $H_0$.  
  
One may also note that Theorem~\ref{thm:simplicity} may be extended to  random potentials which behave as assumed there only along a subset $\G' \subset \G$  provided 
\be 
\overline{\rm{span} }\left \{ (H(\omega)-z)^{-1} \delta_x \, | \, z\in \C\backslash \R ,\, x\in \G' \right \} \ =\   \ell^2(\G)\,, 
\ee
 i.e., the collection of vectors $(\delta_x)_{x\in\G'} $ is cyclic under the action of  $H(\omega)$. 

More extensive   discussions of the behavior of spectra under independent rank-one perturbations can be found in~\cite{SimR1, AiWa_RO}.

\medskip

\subsection*{Acknowledgment}
This work was supported in parts by the NSF grant PHY--1305472 (MA), and the DFG grant WA 1699/2-1 (SW).  
We thank CPAM managing editor Paul Monsour for his very helpful copyediting.

\end{document}